\newcommand{\argmax}{\arg\!\max}
\newtheorem{theorem}{Theorem}
\newtheorem{lemma}[theorem]{Lemma}
\newtheorem{proposition}[theorem]{Proposition}
	\renewcommand{\qedsymbol}{\qedsymbol}}
\DeclareMathAlphabet\mathbfcal{OMS}{cmsy}{b}{n}
\definecolor{Gray}{gray}{0.9}
\definecolor{LightCyan}{rgb}{0.88,1,1}
\algnewcommand{\IIf}[1]{\State\algorithmicif\ #1\ \algorithmicthen}
\algnewcommand{\IElse}[2]{\State\algorithmicelse\ #2\ }
\algnewcommand{\EndIIf}{\unskip\ \algorithmicend\ \algorithmicif}
\begin{document}
	\setlength{\parskip}{0em}
	
	\title{Throughput Maximization with an Average  Age of Information Constraint in Fading Channels
	}
	\author
	{
	\IEEEauthorblockN
	{   Rajshekhar Vishweshwar Bhat,~\IEEEmembership{Member,~IEEE,}
	    and Rahul Vaze,~\IEEEmembership{Member,~IEEE}\\
	    and Mehul Motani,~\IEEEmembership{Fellow,~IEEE}
	}
}
	\maketitle
\begin{abstract}
In the emerging fifth generation (5G) technology, communication nodes are expected to support two crucial classes of information traffic, namely, the enhanced mobile broadband (eMBB) traffic with high data rate requirements,  and ultra-reliable low-latency communications (URLLC) traffic with strict requirements on latency and reliability. The URLLC traffic, which is usually analyzed by a metric called the age of information (AoI), is assigned the first  priority over the resources at a node. Motivated by this, we consider long-term average throughput maximization problems subject to average AoI and power constraints in a single user fading channel, when (i) perfect and (ii) no channel state information at the transmitter (CSIT) is available.  
We propose simple \emph{age-independent} stationary randomized policies (AI-SRP),  which allocate powers at the transmitter based only on the channel state and/or distribution information, without any knowledge of the AoI. We show that the optimal throughputs achieved by the AI-SRPs for scenarios (i) and (ii) are at least equal to the half of the respective optimal long-term average throughputs, independent of all the parameters of the problem, and that they are within additive gaps, expressed in terms of the optimal dual variable corresponding to their average AoI constraints, from the respective optimal long-term average throughputs.    
 \end{abstract}
	\IEEEpeerreviewmaketitle
\section{Introduction}
In the   Internet of Things (IoT) applications supported by the emerging fifth generation (5G) technology, multiple information streams may need to be transmitted from a source to a destination, in which certain information streams may be more critical than others in the sense that they may require \emph{timely} delivery with a certain quality requirement \cite{5G-Survey1,5G-Survey2,Krishna_IITM,Bennis}.  For instance, real-time URLLC  applications, such as the autonomous driving, remote surgery, augmented reality, haptic communication and disaster management,  require timely information delivery \cite{5G-Survey1}. 
The timeliness is measured by the age of information (AoI), the time elapsed since the generation of the last successful update \cite{Krishna_IITM,Bennis}, and the \emph{quality} of information is quantified by the number of bits delivered for an update \cite{AQI}. 
In these applications,  timely but \emph{low-quality} updates are not useful for subsequent actions. Hence, it is sensible to deem an update to be successful only if the number of bits delivered per update is above a threshold.  
Once a fraction of the available resources has been allocated to satisfy the above    requirements,  expressed in terms of an average AoI, the remaining resources at a node can be used for transmission of other information streams, such as eMBB traffic, in a best-effort manner \cite{5G-Survey1,5G-Survey2,Krishna_IITM}.  
Hence, in this work, we aim to maximize a long-term average throughput subject to a long-term average AoI. The resource constraint we account for at the node is a long-term average power constraint.

The optimization of AoI and throughputs, independently, over wireless channels has been widely studied in the recent and past years \cite{AoI-monograph,Tse}. However,  
 AoI minimization and throughput maximization  problems subject to throughput and age constraints, respectively, have not been widely studied. 
The authors in \cite{Uplink-Modiano, Uplink-Modiano1} consider a problem of AoI minimization at an access point in an uplink wireless communication channel subject to   throughput constraints on information transmitted from source nodes.   The authors prove that a simple  age-independent stationary randomized policy (AI-SRP) has at most twice the average AoI of the optimal policy.  The authors in \cite{Globecom-18} study a multiple access channel with two objectives: minimization of an average AoI of
 \emph{age-critical} primary nodes subject to an aggregate  throughput
requirement in secondary nodes and maximization of an aggregate secondary  user throughput subject to an average AoI constraint. 
Moreover, the authors in \cite{Krishna_IITM} consider the problem of minimizing peak AoI subject to a throughput constraint  over a downlink erasure channel. 
 The authors in \cite{Modian-MobiCom} consider a wireless network with a  base station serving packets from multiple streams to multiple destinations over a wireless channel. They derive a lower bound on an average AoI using the queuing theoretic framework. 
\cite{HARQ} proposes scheduling policies for the transmission of status updates over an error-prone communication channel to minimize a long-term average AoI under a constraint on the average number of transmissions at the source.  
The authors in \cite{MarkovChannels} consider a wireless fading channel modeled as a two-state Markov chain, where channel alternates between a \emph{good} and a \emph{bad} state with certain transition probabilities, and derive a closed-form expression for an average AoI.

Unlike in the above works, where communication occurs over simple two-state channels,  we consider a  general wireless fading channel with a random channel power gain. 
Moreover, 
our goal is to obtain transmit powers in every slot to optimize the considered problem, unlike in \cite{Uplink-Modiano, Uplink-Modiano1, Globecom-18, Modian-MobiCom, Krishna_IITM,HARQ}, where, the only decision to be made is whether to transmit or not in a given slot. 
To the best of our knowledge, the only work that considers a power adaptation to minimize an average AoI is \cite{Age-PowerControl}, where   transmit power is varied based on the number of negative acknowledgments received by the transmitter,  communicating over a fading channel without CSIT. 
Unlike in \cite{Age-PowerControl}, our goal is to adapt transmit powers based on  the channel power gain realization and/or distribution.

The main contributions of the work are as follows. 
\begin{itemize}
	\item We propose AI-SRPs that maximize long-term average throughputs subject to average AoI and power constraints, based on the knowledge of the current channel state and distribution information, when perfect CSIT is available, and only the channel distribution information, when no CSIT is available, in a single-user fading channel. 
	
	\item  For the proposed AI-SRPs, when perfect CSIT is available, we provide a closed-form expression for the optimal transmit power. When no CSIT is available, we adopt a broadcast strategy where codewords are designed for each possible channel realizations and superimposed before transmissions, and depending on the channel realization, the receiver can reliably recover a fraction of the codewords. 

	\item We show that the maximum long-term average throughputs achieved in the proposed AI-SRPs under perfect and no CSIT cases are at least half of the respective optimal long-term average throughputs, independent of all the parameters of the problems, and that they are within additive gaps  from the  respective optimal long-term average throughputs. We express the additive gaps in terms of the optimal dual variables  corresponding to the respective AoI constraints. 
\end{itemize}

\emph{In summary, we find that   age-independent policies are near-optimal for maximization of long-term average throughputs subject to average AoI and power constraints in fading channels.}  

The remainder of the paper is organized as follows. We present the system model, problem formulation and an upper-bound to the problem in Section~\ref{sec:system-model}. 
 We propose    AI-SRPs and bound their performance for the perfect and no CSIT cases in Section~\ref{sec: perfect CSIT} and Section~\ref{sec: NO CSIT}, respectively. We present numerical results in Section~\ref{sec: numerical results} and conclude in Section \ref{sec:conclusions}.

\section{System Model and Problem Formulation}\label{sec:system-model}
In this section, we first describe the channel, rate and AoI models considered in the work. We then present the long-term average AoI and power constraints followed by the long-term average throughput maximization problem subject to these constraints. We finally present an optimization problem whose optimal objective value is an upper-bound to that of the original problem, based on a technique proposed in \cite{Uplink-Modiano}. We will exploit this \emph{upper-bound} to obtain bounds on the AI-SRPs that we propose in the subsequent sections. 
\subsection{Channel Model}
We consider a block-fading channel in which the channel power gain realization remains constant over a fixed duration of time, referred to as the channel coherence block length, and changes across blocks independently. 
Let the random channel power gain in block $k$ be denoted by $H(k)$. We assume that  $\{H(1),H(2),\ldots\}$ is a sequence of independent and identically distributed (i.i.d.) random variables, distributed as $H$ over alphabets in  $\mathcal{H}$, a compact subset of $\mathbb{R}^+$, where $\mathbb{R}^+$ is the set of positive non-zero real numbers.   
 We consider two cases: (i) when the CSIT is perfectly available  and (ii) when no CSIT is available, at the start of a channel coherence block. For ease of presentation, we consider $H$  to be a continuous random variable  with the probability density function, $f_h$, where $\int_{h\in \mathcal{H}}f_hdh=1$ for (i), and a discrete random variable with $N$ non-zero discrete realizations such that $H=h_i$ with probability $p_i>0$ for all $i\in \{1,\ldots,N\}$, where $h_i$'s are ordered such that  $0<h_1<h_2<\ldots<h_N$ and $\Sigma_{i=1}^Np_i=1$ for (ii). The results presented in the paper for (i) and (ii) can be easily adapted for  discrete and continuous channel power gain distributions, respectively.   
 Without loss of generality, we assume that the length of each coherence block is  unity. 

\subsection{Rate Model}
In the emerging latency and age-critical applications, it is important to adopt packets with short lengths to minimize overheads, such as transmission and processing delays \cite{5G-Survey1,5G-Survey2}. Motivated by this, we consider the case when the status update packets (codewords) span a single channel coherence block.
Let $P(k)$ and $R_p(k)$ respectively be the transmit  power and transmit rate in frame $k$. In this case, depending on the channel realization, only a fraction of the transmitted bits can be successfully decoded. Let $R_s(k)\leq R_p(k)$ be the number of bits that are reliably decoded in frame $k$, which can be possibly random. 
Then, the long-term average number of bits successfully delivered, referred to as the long-term average throughput, is given by
\begin{align}\label{eq:avg-throughput}
\bar{R}=\lim_{K\rightarrow \infty}\frac{1}{K}\sum_{k=1}^{K}\mathbb{E}[R_s(k)],
\end{align}
where the expectation is with respect to the channel power gain process $\{H(1),H(2)\ldots\}$.

\subsection{AoI Model}
Let the instantaneous AoI in block $k$ be denoted by $a(k)$. 
We deem an  update to be successful in frame $k$ if the number of bits reliably decoded in frame $k$, $R_s(k)$, is strictly greater than  $R_0$.
Whenever an update is successful, the age of information (AoI) is reset to $1$, otherwise, the AoI increases by $1$.   That is, the AoI evolves as follows: 
\begin{align}\label{eq:age_evolution}
a(k+1)= 
\begin{cases}
1& R_s(k)> R_0,\\
a(k)+1             & \text{otherwise}.  
\end{cases}
\end{align}
Without loss of generality, we assume $a(0)= 0$. Then, the long-term average AoI is given by 
\begin{align}\label{eq:Avg-AoI}
\bar{a}\triangleq \lim_{K\rightarrow \infty}\frac{1}{K}\sum_{k=1}^{K}\mathbb{E}[a(k)], 
\end{align}
where the expectation  is with respect to the channel power gain process $\{H(1),H(2)\ldots\}$.

\subsection{Constraints}
As discussed in the introduction, in various emerging applications, the  timeliness of information, measure by the AoI, is an important requirement. In order to  incorporate this, we impose the following average AoI constraint:  
$\bar{a}\leq \alpha$.
 We also impose the following average power constraint: 
\begin{align}\label{eq:Avg-Power}
\lim_{K\rightarrow \infty}\frac{1}{K}\sum_{k=1}^{K}\mathbb{E}[P(k)]\leq \bar{P},
\end{align}
where we recall that $P(k)$ is the transmit power in frame $k$. 
The expectation in    \eqref{eq:Avg-Power} is with respect to the channel power gain process $\{H(1),H(2)\ldots\}$. 

\subsection{Average Throughput Maximization}
With the above setting, our goal is to maximize the long-term average throughput subject to the long-term average AoI and long-term average power constraints, which can be accomplished by solving the following optimization problem:  

\begin{subequations}\label{eq:main-opt-problem}
	\begin{align}
	R^*=\underset{P(1), P(2), \ldots}{\text{max}} &\;\; \lim_{K\rightarrow \infty}\frac{1}{K}\sum_{k=1}^{K}\mathbb{E}[R_s(k)],  \;\;\label{eq:throughput}\\
	\text{subject to}&\;\; \lim_{K\rightarrow \infty}\frac{1}{K}\sum_{k=1}^{K}\mathbb{E}[a(k)]\leq \alpha, \label{eq:avg-AoI}\\
	&\;\; \lim_{K\rightarrow \infty}\frac{1}{K}\sum_{k=1}^{K}\mathbb{E}[P(k)]\leq \bar{P}, \label{eq:avgP}
	\end{align}
\end{subequations}
where \eqref{eq:avg-AoI} and \eqref{eq:avgP} are the average AoI and power constraints, respectively.

The optimization problem in \eqref{eq:main-opt-problem} has the structure of a constrained Markov decision process \cite{VSharma,CMDP}, with the state space given by the set $\mathcal{H}\cup \mathcal{A}$, where $\mathcal{A}$ is the set of all possible realizations of age, and the action space is given by $\mathbb{R}^+\cup \{0\}$. 
In slot $k\in \{1,2,\ldots\}$, the policy $P(k)$ is a mapping (possibly random) from $(h(1), a(1),  P(1),  h(2), a(2), P(2),\ldots,h(k-1), a(k-1), P(k-1), h(k), a(k))$ to $\mathbb{R}^+\cup \{0\}$.  
Let $\Pi$ be the space of all such policies. A \emph{stationary policy} is of the form $\pi=(\phi,\phi,\ldots)$, where $\phi$ is a mapping (possibly random) from $\mathcal{H}\cup \mathcal{A}$ to $\mathbb{R}^+\cup \{0\}$.  
\emph{In this work, we are interested in  an {AI-SRP} with a bounded performance gap from the optimal $R^*$ in \eqref{eq:main-opt-problem}. An AI-SRP is of the form $\pi'=(\psi,\psi,\ldots)$, where $\psi$ is a mapping (possibly random) from $\mathcal{H}$ to $\mathbb{R}^+\cup \{0\}$.}  

Before we proceed to obtain AI-SRPs for the perfect and no CSIT cases in the subsequent sections, in the following, we present an upper-bound to $R^*$, which we will eventually  exploit to bound the performance of AI-SRPs.

\subsection{An Upper Bound on \eqref{eq:main-opt-problem}}\label{sec:UB} 
Define the following indicator variable:  
\begin{align}\label{eq:u}
u(k)\triangleq  
\begin{cases}
1& R_s(k)> R_0,\\
0             & \text{otherwise},   
\end{cases}
\end{align}
and consider the following optimization problem:  
\begin{subequations}\label{eq:upper-bound-problem}
	\begin{align}
	U=\underset{P(1), P(2), \ldots}{\text{max}} &\;\; \lim_{K\rightarrow \infty}\frac{1}{K}\sum_{k=1}^{K}\mathbb{E}[R_s(k)],  \;\;\\
	\text{subject to}&\;\; \lim_{K\rightarrow \infty}\frac{1}{K}\sum_{k=1}^{K}\mathbb{E}[u(k)]\geq \frac{1}{2\alpha -1},\label{eq:NofUpdates} \\
	&\;\; \lim_{K\rightarrow \infty}\frac{1}{K}\sum_{k=1}^{K}\mathbb{E}[P(k)]\leq \bar{P}.\label{eq:avgPub} 
	\end{align}
\end{subequations}
We now have the following result. 
\begin{proposition}\label{prop:UB}
	The optimal value of \eqref{eq:upper-bound-problem}, $U$, is an upper-bound to the optimal objective value of \eqref{eq:main-opt-problem}. 
\end{proposition}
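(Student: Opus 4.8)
The plan is to show that \eqref{eq:upper-bound-problem} is a \emph{relaxation} of \eqref{eq:main-opt-problem}. Both problems maximize exactly the same objective, the throughput $\lim_{K\to\infty}\frac{1}{K}\sum_{k=1}^{K}\mathbb{E}[R_s(k)]$, and the power constraints \eqref{eq:avgP} and \eqref{eq:avgPub} are identical. Hence it suffices to prove that every policy satisfying the average AoI constraint \eqref{eq:avg-AoI} automatically satisfies the update-frequency constraint \eqref{eq:NofUpdates}. Once that implication is in hand, the feasible set of \eqref{eq:main-opt-problem} is contained in that of \eqref{eq:upper-bound-problem}, and maximizing the same objective over a larger set can only increase its value, whence $U\ge R^*$.

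The heart of the argument is therefore a deterministic inequality tying the long-run average age $\bar{a}$ to the long-run update rate $\rho\triangleq \lim_{K\to\infty}\frac{1}{K}\sum_{k=1}^{K}\mathbb{E}[u(k)]$, where $u(k)$ is the success indicator of \eqref{eq:u}. I would fix a sample path and a horizon $K$ and use the age recursion \eqref{eq:age_evolution} to partition the frames into renewal intervals delimited by successful updates: inside an interval of length $I_j$ the age takes the values $1,2,\dots,I_j$, contributing $\tfrac{1}{2}I_j(I_j+1)$ to $\sum_{k=1}^{K}a(k)$. Writing $M=\sum_{k=1}^{K}u(k)$ for the number of updates and noting $\sum_j I_j = K$, a Cauchy--Schwarz bound gives $\sum_j I_j^2\ge K^2/M$, so that $\tfrac{1}{K}\sum_{k=1}^{K}a(k)\ge \tfrac{1}{2}\bigl(K/M+1\bigr)$. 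Taking expectations, applying Jensen's inequality to the convex map $M\mapsto K/M$, and letting $K\to\infty$ then yields $\bar{a}\ge \tfrac{1}{2}\bigl(1/\rho+1\bigr)$.

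With this inequality the conclusion is immediate: if a policy meets \eqref{eq:avg-AoI}, i.e.\ $\bar{a}\le\alpha$, then $\tfrac{1}{2}(1/\rho+1)\le\alpha$, which rearranges to $1/\rho\le 2\alpha-1$ and hence $\rho\ge \frac{1}{2\alpha-1}$, which is precisely \eqref{eq:NofUpdates}; the feasible-set containment then gives $U\ge R^*$. The step I expect to require the most care is making the renewal decomposition rigorous under randomness and unbounded interval lengths: one must treat the first (possibly incomplete) interval and the final truncated interval as vanishing $O(1/K)$ boundary terms, justify interchanging expectation with the time average, and replace the raw $\lim$ by $\liminf$/$\limsup$ wherever the limits are not a priori known to exist, so that the chain of inequalities survives the passage to the limit.
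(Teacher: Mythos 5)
Your proposal is correct and follows essentially the same route as the paper's Appendix A: the same decomposition into inter-update renewal intervals each contributing $\tfrac{1}{2}I(I+1)$, the same Cauchy--Schwarz and Jensen steps, the limit $K\to\infty$, and the same relaxation/feasible-set-containment conclusion $U\ge R^*$. The only difference is bookkeeping of the truncated last interval: the paper keeps the residual $S$ explicitly and minimizes over it to get the pathwise bound $\tfrac{1}{2}\bigl(K/(M+1)+1\bigr)$ (your version with $\sum_j I_j=K$ and $K/M$ is not valid pathwise), which is precisely the boundary-term care you flagged and which washes out in the limit, yielding the same asymptotic inequality $\bar{a}\ge\tfrac{1}{2}\bigl(1/\rho+1\bigr)$ and hence \eqref{eq:NofUpdates}.
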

\begin{proof}
 For obtaining the result, we first note the fact that 
\begin{align}\nonumber
\lim_{K\rightarrow \infty}\frac{1}{K}\sum_{k=1}^{K}\mathbb{E}[a(k)] \geq \frac{1}{2}\frac{1}{\lim_{K\rightarrow \infty}\frac{1}{K}\sum_{k=1}^{K}\mathbb{E}\left[u(k)\right]}+1,
\end{align}
which can be proved along the lines in the proof of Theorem 1 in \cite{Uplink-Modiano}. 
Now, applying the constraint on the long-term average AoI in \eqref{eq:avg-AoI}, and re-arranging the terms, we get \eqref{eq:NofUpdates}. By doing so, since we are constraining a lower bound on the long-term average AoI, the optimal solution with \eqref{eq:NofUpdates} is an upper-bound to that of  \eqref{eq:main-opt-problem}.  See  Appendix A   for details.   
\end{proof}
From the definition of $u(k)$ in \eqref{eq:u}, the left-hand side of \eqref{eq:NofUpdates} physically signifies the long-term average number of successful updates.

\section{Perfect CSIT Case}\label{sec: perfect CSIT}
Since perfect CSIT is available, we can always transmit
at the maximum rate of reliable communication, i.e., we can
transmit with rate $R_p(k) = R_s(k)$. When the
channel realization in frame $k$ is $h(k)$ and transmit power is
$P(k)$, we assume $R_s(k)=r(h(k)P(k))$, where $r(x)$ is the maximum rate of reliable communication when the received signal-to-noise ratio is $x$, and $r(\cdot)$ is an invertible, concave, strictly increasing function defined over $[0,\infty)$ such that $r(0)=0$. 
 Hence, when perfect CSIT is available, \eqref{eq:main-opt-problem} can be re-written as
\begin{subequations}\label{eq:main-opt-problemCSIT}
	\begin{align}
	R^*_{\rm CSIT}=\underset{P(1), P(2), \ldots}{\text{max}} &\;\; \lim_{K\rightarrow \infty}\frac{1}{K}\sum_{k=1}^{K}\mathbb{E}[r(h(k)P(k))],\label{eq:thr}  \\
		\text{subject to}&\;\; \eqref{eq:avg-AoI}, \eqref{eq:avgP}, 
	\end{align}
\end{subequations}
and the upper-bound problem in \eqref{eq:upper-bound-problem} can be re-written as
\begin{subequations}\label{eq:upper-bound-problemCSIT}
	\begin{align}
	U_{\rm CSIT}=\underset{P(1), P(2), \ldots}{\text{max}} &\;\; \lim_{K\rightarrow \infty}\frac{1}{K}\sum_{k=1}^{K}\mathbb{E}[r(h(k)P(k))],  \;\;\\
	\text{subject to}&\;\; \eqref{eq:NofUpdates},\eqref{eq:avgPub}.  
	\end{align}
\end{subequations}
In the remaining part of the section, we first propose an AI-SRP. We  then transform the upper-bound problem in \eqref{eq:upper-bound-problemCSIT} to an equivalent problem, which is identical to the proposed AI-SRP with the average AoI constraint of $\alpha$ relaxed to $2\alpha -1$. We finally note that the optimal throughput of the proposed AI-SRP is concave in $\alpha$ and use this fact to obtain additive and multiplicative bounds on its performance.

\subsection{AI-SRP}\label{sec:AI-SRP CSIT}
In this policy, when the channel power gain is $h$, we transmit with power
\begin{align}\label{eq:AISRP-policy}
P_h=  
\begin{cases}
P_h^1 & \text{with probability}\; \mu_h,\\
P_h^2            & \text{otherwise},   
\end{cases}
\end{align}
where $P_h^1> {r^{-1}(R_0)}/{h}$ and $P_h^2\leq {r^{-1}(R_0)}/{h}$. 
Clearly, when $P_h=P_h^1$, the update will be successful always, as $R_s>R_0$ and the AoI will be reset to $1$. Hence, the probability of success, $\mathbb{P}(R_s> R_0)=\mathbb{E}[\mu_h]$. Since the channel power gains are i.i.d. across slots, the \emph{inter-success interval}, the number of blocks between two consecutive successful updates,  is geometrically distributed random variables with mean $1/\mathbb{E}[\mu_h]$ and we can evaluate the long-term average AoI  in \eqref{eq:Avg-AoI}  to be  $1/\mathbb{E}[\mu_h]$ \cite{Uplink-Modiano}. Now, the average AoI constraint in \eqref{eq:avg-AoI} can be rewritten as   $\mathbb{E}[\mu_h]\geq 1/\alpha$. Further, the long-term average throughput in \eqref{eq:thr} and the average power constraint in \eqref{eq:avgP}   can be re-written as 
$\int_{h\in \mathcal{H}} \left(\mu_hr(hP_h^1)+(1-\mu_h)r(hP_h^2)\right)f_hdh$ and  $\int_{h\in \mathcal{H}}   \left(\mu_hP_h^1+(1-\mu_h)P^2_h\right)f_hdh \leq \bar{P}$, respectively. 
We can now rewrite \eqref{eq:main-opt-problemCSIT} as follows: 
\begin{subequations}\label{eq:stat_rand_CSIT} 
	\begin{align}
	\underset{P_h^1,P_h^2,\mu_h}{\text{max}} &\int_{h\in \mathcal{H}} \left(\mu_hr(hP_h^1)+(1-\mu_h)r(hP_h^2)\right)f_hdh,  \;\;&\label{eq:objSRP}\\
	\text{subject to}&\; \int_{h\in \mathcal{H}}\mu_h f_hdh\geq\frac{1}{\alpha},~ 0\leq \mu_h	\leq 1, \label{eq:AOI} \\
	&  \int_{h\in \mathcal{H}}   \left(\mu_hP_h^1+(1-\mu_h)P^2_h\right)f_hdh \leq \bar{P}, \label{eq:Pc}\\
	& P_1^h\geq  \frac{r^{-1}(R_0+\epsilon)}{h},~ 0\leq P_2^h\leq   \frac{r^{-1}(R_0)}{h},
	\end{align}
\end{subequations}
for an arbitrarily small $\epsilon>0$. 
The above problem is non-convex, i.e., it is not jointly convex in $P_h^1,P_h^2$ and $\mu_h$. To transform it into a jointly convex problem, define $v^1_h\triangleq \mu_hP^1_h$ and $v^2_h\triangleq (1-\mu_h)P^2_h$. We can now re-write \eqref{eq:stat_rand_CSIT} as the following jointly convex optimization problem:   
\begin{subequations}\label{eq:stat_rand_CSIT_convex}  
	\begin{align}
	R_{\rm CSIT}^{\rm AI\_SRP}\left(\alpha\right)=
	\underset{\mu_h, v^1_h, v^2_h\geq 0}{\text{max}} &\;\;\int_{h\in \mathcal{H}} \left(\mu_hr\left(\frac{hv_h^1}{\mu_h}\right)+(1-\mu_h)r\left(\frac{hv_h^2}{1-\mu_h}\right)\right)f_hdh,  \;\;\label{eq:obj}\\
	\text{subject to}&\; \int_{h\in \mathcal{H}}\mu_h f_hdh\geq \frac{1}{\alpha},~ 0\leq \mu_h	\leq 1,\label{eq:outage} \\
	&\;  \int_{h\in \mathcal{H}}   \left(v_h^1+v^2_h\right)f_hdh \leq \bar{P},\label{eq:avp} \\
	&\; v_1^h\geq \mu_h\frac{r^{-1}(R_0+\epsilon)}{h},\\
	&\; v_2^h\leq  (1-\mu_h) \frac{r^{-1}(R_0+\epsilon)}{h},~ v^2_h\geq 0, 
	\end{align}
\end{subequations} 
for an arbitrarily small $\epsilon>0$.


\subsection{An Analytical Optimal Solution for the AI-SRP}
Let $\mu_h^*, v^{1*}_h, v^{2*}_h$ respectively be the optimal $\mu_h, v^1_h, v^2_h$ that solve  \eqref{eq:stat_rand_CSIT_convex}. Then the optimal solution to \eqref{eq:stat_rand_CSIT} is given by $P^{1*}_h= v^{1*}_h/\mu_h^*$ and $P^{2*}_h= v^{2*}_h/\mu_h^*$. 
In the following theorem, we now present the optimal $u_h^*$, $P^{1*}_h$ and $P^{2*}_h$. 
\begin{theorem}\label{prop:CI-WF}
	For some $\lambda\geq 0$, let  
	\begin{align*}
		P_{\lambda,h} &\triangleq 	\min\{P:hr'(hP)=\lambda, P\geq 0\},\;\; \forall\, h\in \mathcal{H},\\
		h_{\alpha}&\triangleq \max\left\{h:\int_{h}^{\infty} f_hdh =\frac{1}{\alpha}\right\},  \\
		h_{\lambda}&\triangleq \min\left\{h:h=\frac{r^{-1}(R_0+\epsilon)}{P_{\lambda,h}}\right\},
	\end{align*}
	where $r'(x)$ is the first derivative of $r(x)$ with respect to $x$. 
	Then, we have
	\begin{align}\label{eq:optPower}
		\mu_h^*&=  
		\begin{cases}
			1, & \forall\, h\geq h_{\alpha},\\
			0,             &  \text{otherwise}, 
		\end{cases}, \nonumber\\
		P_h^{1^*}&=  
		\begin{cases}
			\frac{r^{-1}(R_0+\epsilon)}{h}, & \forall\, h_{\alpha}\leq  h<  	h_{\lambda},\\
			P_{\lambda,h},             &  \forall\, h\geq h_{\lambda}, 
		\end{cases},
		\;\;\; 		P_h^{2^*}= P_{\lambda,h},\; \forall \,  h\leq h_{\alpha}, 
	\end{align}	
	for an arbitrarily small $\epsilon>0$, where  $\lambda$ is chosen to satisfy the average power constraint in \eqref{eq:Pc}, or equivalently,  \eqref{eq:avp}.   
\end{theorem}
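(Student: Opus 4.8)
The plan is to exploit that \eqref{eq:stat_rand_CSIT_convex} is a jointly convex program: the perspective $\mu_h\,r(hv_h^1/\mu_h)$ of the concave $r$ is jointly concave in $(\mu_h,v_h^1)$ (and likewise for the second term), while the constraints \eqref{eq:outage}--\eqref{eq:avp} and the box/power bounds are affine in the decision variables. Hence the KKT conditions are both necessary and sufficient, and it suffices to exhibit a primal--dual pair satisfying them. First I would attach a multiplier $\lambda\ge 0$ to the power constraint \eqref{eq:avp} and $\nu\ge 0$ to the AoI constraint \eqref{eq:outage}, keeping the box constraint $0\le\mu_h\le 1$ and the power bounds as explicit pointwise constraints. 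Because the objective and both integral constraints are additive in $h$, the Lagrangian separates into a family of per-channel problems; after substituting back $P_h^1=v_h^1/\mu_h$ and $P_h^2=v_h^2/(1-\mu_h)$, the per-$h$ integrand becomes
\[
\mu_h[r(hP_h^1)-\lambda P_h^1+\nu]+(1-\mu_h)[r(hP_h^2)-\lambda P_h^2],
\]
to be maximized over $\mu_h\in[0,1]$, $P_h^1\ge r^{-1}(R_0+\epsilon)/h$, and $0\le P_h^2\le r^{-1}(R_0+\epsilon)/h$.

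For fixed $\mu_h$ the inner maximizations over $P_h^1$ and $P_h^2$ decouple, each of the form ``maximize $r(hP)-\lambda P$.'' The unconstrained maximizer is exactly $P_{\lambda,h}$, since its defining stationarity condition $hr'(hP)=\lambda$ together with the $\min$/$P\ge 0$ convention encodes the water-filling floor. Imposing the lower bound gives $P_h^{1*}=\max\{P_{\lambda,h},\,r^{-1}(R_0+\epsilon)/h\}$. To locate the crossover I would substitute $x=hP$, so that stationarity reads $r'(x)=\lambda/h$; since $r'$ is non-increasing, $x=hP_{\lambda,h}$ (and hence $r(hP_{\lambda,h})$) is increasing in $h$, and it equals $R_0+\epsilon$ precisely at $h_\lambda$. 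This yields the two cases of $P_h^{1*}$. The upper-bounded maximization for $P_h^2$ returns $P_h^{2*}=P_{\lambda,h}$ wherever the cap is slack, which I would verify holds throughout $h\le h_\alpha$ once $h_\alpha\le h_\lambda$ is established.

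With the optimal powers substituted, the per-channel objective is \emph{affine} in $\mu_h$ with slope
\[
\Delta_h:=[r(hP_h^{1*})-\lambda P_h^{1*}+\nu]-[r(hP_h^{2*})-\lambda P_h^{2*}],
\]
so the maximizer is bang-bang: $\mu_h^*=1$ if $\Delta_h>0$ and $\mu_h^*=0$ if $\Delta_h<0$. To turn this into a single threshold I would prove $\Delta_h$ is increasing in $h$ via the envelope theorem. Writing $\Phi(h):=\nu-\Delta_h$ as the gap between $\max_P[r(hP)-\lambda P]$ and its value at the floor $r^{-1}(R_0+\epsilon)/h$, the envelope theorem (using $r'(hP_{\lambda,h})=\lambda/h$) gives
\[
\Phi'(h)=\frac{\lambda}{h}\left(P_{\lambda,h}-\frac{r^{-1}(R_0+\epsilon)}{h}\right)<0
\]
on $\{h<h_\lambda\}$, where the floor exceeds the water-filling level, while $\Delta_h\equiv\nu$ for $h\ge h_\lambda$; hence $\Delta_h$ rises through a single zero-crossing. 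Since lowering the threshold only adds power-costly successes, the AoI constraint is active, forcing the crossing to sit at the $h_\alpha$ defined by $\int_{h_\alpha}^\infty f_h\,dh=1/\alpha$, which fixes $\nu=\Phi(h_\alpha)\ge 0$ and confirms $h_\alpha\le h_\lambda$; the remaining multiplier $\lambda$ is then chosen by complementary slackness to make \eqref{eq:avp} bind. Collecting $\mu_h^*,P_h^{1*},P_h^{2*}$ reproduces \eqref{eq:optPower}.

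The main obstacle I anticipate is the monotonicity (single-crossing) of $\Delta_h$: the bang-bang observation alone permits an optimal $\mu_h^*$ that toggles on a complicated set, and it is precisely the sign of $\Phi'(h)$ — together with the monotonicity of $r(hP_{\lambda,h})$ in $h$ — that collapses the solution to the clean threshold $h_\alpha$. A secondary technical point is the behaviour of the perspective function at $\mu_h\in\{0,1\}$ (interpreting $0\cdot r(0/0)=0$ and checking the associated limits of the inner constraints) and the non-uniqueness of $P_{\lambda,h}$ when $r'$ has flat pieces; both are absorbed by the $\min$ in the definition and the boundary convention, but I would state them explicitly to make the KKT verification airtight.
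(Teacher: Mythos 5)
Your proposal is correct and is essentially the paper's own route: the paper proves Theorem~\ref{prop:CI-WF} by observing that the AoI constraint \eqref{eq:outage} acts as an outage-probability constraint, so that \eqref{eq:stat_rand_CSIT_convex} with $r(x)=\log(1+x)$ is exactly the service-outage problem of \cite{ServiceOutage}, and then citing the KKT-based Theorem 1 of that reference together with the assertion that the same derivation goes through for general invertible concave increasing $r$. Your Lagrangian/KKT program --- per-$h$ decoupling, projected water-filling for $P_h^1$ and $P_h^2$, bang-bang structure in $\mu_h$, envelope-theorem single-crossing, and threshold placement at $h_\alpha$ via complementary slackness --- is precisely that outsourced derivation written out self-containedly and directly for general $r$, so it supplies the details the paper never displays. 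Two small points. First, for $h\ge h_\lambda$ it is not true that $\Delta_h\equiv\nu$: there the cap on $P_h^2$ binds in the $\mu_h=0$ branch, so $\Delta_h=\nu+\Phi(h)\ge \nu$ with $\Phi$ nonnegative and increasing on that range (the same envelope computation gives $\Phi'(h)=(\lambda/h)\bigl(P_{\lambda,h}-r^{-1}(R_0+\epsilon)/h\bigr)>0$ there); this slip is harmless since it only reinforces the single zero-crossing. Second, your step ``the AoI constraint is active'' is exactly where $h_\alpha\le h_\lambda$ enters: if $h_\alpha>h_\lambda$, dual feasibility forces $\nu=0$, the constraint is slack, and the optimal threshold sits at $h_\lambda$ rather than $h_\alpha$, a regime in which \eqref{eq:optPower} itself is no longer the optimizer --- but this caveat is implicit in the theorem statement (and inherited from \cite{ServiceOutage}) rather than a defect peculiar to your argument, so it should simply be stated as a standing assumption when you fix $\nu=\Phi(h_\alpha)$.
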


\begin{proof}
	The result can be obtained by first finding $\mu_h^*, v^{1*}_h, v^{2*}_h$  of  \eqref{eq:stat_rand_CSIT_convex} and then assigning $P^{1*}_h= v^{1*}_h/\mu_h^*$ and $P^{2*}_h= v^{2*}_h/\mu_h^*$.  
	The average AoI constraint in \eqref{eq:outage} can be treated as an outage probability constraint, when an outage event is said to occur if $R_s(k)\leq R_0$.  With this observation, if we let $r(x)=\log(1+x)$, the optimization in \eqref{eq:stat_rand_CSIT_convex} is in fact the throughput maximization problem subject to an outage probability and average power constraints studied in \cite{ServiceOutage}, whose optimal solution is given in Theorem 1 of  \cite{ServiceOutage} using the Karush-Kuhn-Tucker (KKT) conditions. 
	In other words, Theorem 1 of  \cite{ServiceOutage} provides the optimal solution to \eqref{eq:stat_rand_CSIT_convex}  when $r(x)=\log(1+x)$. 
	Following the proof of Theorem 1 of  \cite{ServiceOutage},  we can obtain the solution of \eqref{eq:stat_rand_CSIT_convex} for any invertible concave increasing function $r(x)$ with $r(0)=0$, as stated in \eqref{eq:optPower}. 
\end{proof}

We now make the following comments on the above result. 
Note that the proposed AI-SRP in \eqref{eq:optPower} is \emph{achievable} as it satisfies both the average power and AoI constraints, by design. 
Further, when $r(x)=\log(1+x)$, as mentioned in the proof of Theorem~\ref{prop:CI-WF},    \eqref{eq:optPower} specializes to the following power allocation policy obtained in Theorem 1 of  \cite{ServiceOutage}: 
\begin{align}\label{eq:optPower1}
	P_h^*=  
	\begin{cases}
		\frac{\exp(R_0+\epsilon)-1}{h}, & \forall\; h_{\alpha}\leq  h<  \lambda\exp(R_0+\epsilon),\\
		\left(\frac{1}{\lambda}-\frac{1}{h}\right)^+,            &  \text{otherwise}.   
	\end{cases}
\end{align}
The optimal power allocation given in \eqref{eq:optPower1} can be interpreted as a combination of the   well-known channel inversion and water-filling algorithms \cite{ServiceOutage,Tse}. To see this, note that, for $h\in [h_{\alpha}, \lambda\exp(R_0+\epsilon))$, the optimal transmit power is inversely proportional to the channel power gain, similar to the channel inversion algorithm. Otherwise, the optimal transmit power is identical to the water-filling algorithm with water level $1/\lambda$.

\subsection{Multiplicative and Additive Bounds on the Performance of AI-SRP}
In this subsection, we obtain additive and multiplicative bounds on the AI-SRP proposed in \eqref{eq:optPower}. We begin by  expressing the optimal objective value of  the   upper-bound problem, $U_{\rm CSIT}$ in  \eqref{eq:upper-bound-problemCSIT} in terms of $R_{\rm CSIT}^{\rm AI\_SRP}(\cdot)$ in \eqref{eq:stat_rand_CSIT_convex}.

\begin{theorem}\label{thm:UB-CSIT}
	The optimal objective value of the upper-bound problem in  \eqref{eq:upper-bound-problemCSIT},
	\begin{align*}
	U_{\rm CSIT}=R_{\rm CSIT}^{\rm AI\_SRP}(2\alpha-1), 
	\end{align*}
	where $R_{\rm CSIT}^{\rm AI\_SRP}(\cdot)$ is given by  \eqref{eq:stat_rand_CSIT_convex}. 
\end{theorem}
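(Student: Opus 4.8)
The plan is to show that the upper-bound problem \eqref{eq:upper-bound-problemCSIT}, although posed over the full policy space $\Pi$, attains its optimum at an age-independent stationary randomized policy of exactly the two-level form in \eqref{eq:AISRP-policy}, but with the success-rate requirement $1/(2\alpha-1)$ in place of $1/\alpha$. The crucial observation is that, unlike the original problem \eqref{eq:main-opt-problemCSIT}, the upper-bound problem has no dynamic coupling across slots: the age recursion \eqref{eq:age_evolution} has been replaced by the long-term average constraint \eqref{eq:NofUpdates} on the success indicator $u(k)$ of \eqref{eq:u}, so the only ``state'' seen by the controller is the i.i.d.\ channel $H(k)$. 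I would first argue, using this i.i.d.\ structure together with standard occupation-measure (or time-averaging) arguments from constrained MDP theory \cite{CMDP,VSharma}, that it suffices to optimize over stationary policies that map the current channel realization $h$ to a (possibly randomized) transmit power, since for i.i.d.\ states with a purely long-term-average objective and constraints the achievable region of the triple $\bigl(\bar R,\; \lim_{K}\frac{1}{K}\sum_{k}\mathbb E[u(k)],\; \lim_{K}\frac{1}{K}\sum_{k}\mathbb E[P(k)]\bigr)$ is unchanged by this restriction.

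Next I would reduce an arbitrary stationary randomized channel-only policy to the two-level form. Fix a channel state $h$ and let $P$ denote the randomized power used there. Writing $\mu_h\triangleq \mathbb P\bigl(r(hP)>R_0\bigr)=\mathbb P\bigl(P>r^{-1}(R_0)/h\bigr)$ and conditioning on success versus failure, set $P_h^1\triangleq \mathbb E[P\mid \text{success}]$ and $P_h^2\triangleq \mathbb E[P\mid \text{failure}]$. Concavity of $r$ together with Jensen's inequality gives $\mathbb E[r(hP)]\leq \mu_h\, r(hP_h^1)+(1-\mu_h)\, r(hP_h^2)$, while the mean power $\mathbb E[P]=\mu_h P_h^1+(1-\mu_h)P_h^2$ and the success probability $\mu_h$ are preserved exactly; moreover the conditioning forces $P_h^1>r^{-1}(R_0)/h$ and $P_h^2\leq r^{-1}(R_0)/h$. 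Hence collapsing the conditional power distributions to their conditional means can only increase the objective while respecting both constraints, so the optimizing stationary policy may be taken of the two-level structure \eqref{eq:AISRP-policy}, with the strict inequality $P_h^1>r^{-1}(R_0)/h$ captured by the $\epsilon$-margin in \eqref{eq:stat_rand_CSIT}.

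Finally I would substitute $\mathbb E[u(k)]=\int_{h\in\mathcal H}\mu_h f_h\,dh$ for such a stationary two-level policy into \eqref{eq:NofUpdates}, turning that constraint into $\int_{h\in\mathcal H}\mu_h f_h\,dh\geq 1/(2\alpha-1)$, while the objective and the power constraint \eqref{eq:avgPub} become precisely those of \eqref{eq:stat_rand_CSIT}. After the change of variables $v_h^1\triangleq \mu_h P_h^1$ and $v_h^2\triangleq (1-\mu_h)P_h^2$ this is exactly \eqref{eq:stat_rand_CSIT_convex} evaluated at argument $2\alpha-1$, yielding $U_{\rm CSIT}=R_{\rm CSIT}^{\rm AI\_SRP}(2\alpha-1)$. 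I expect the main obstacle to be the first step---rigorously justifying that the full-information, non-stationary policies in $\Pi$ cannot outperform stationary channel-only policies for \eqref{eq:upper-bound-problemCSIT}; this rests entirely on the i.i.d.\ channel and on the fact that replacing the age recursion by \eqref{eq:NofUpdates} leaves no inter-slot state, and is most cleanly settled by an occupation-measure argument, whereas the two-level reduction and the final algebraic identification are routine.
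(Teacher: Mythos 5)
Your proposal is correct and shares the paper's overall skeleton: reduce the upper-bound problem \eqref{eq:upper-bound-problemCSIT} to stationary randomized channel-only policies, collapse any such policy to the two-level form of \eqref{eq:AISRP-policy} by replacing the conditional power distributions (given success/failure) with their conditional means and invoking Jensen's inequality on the concave rate function, and then identify the resulting program with \eqref{eq:stat_rand_CSIT_convex} at argument $2\alpha-1$. Your middle and final steps are essentially verbatim the paper's second lemma and concluding algebra, including the preservation of mean power and success probability and the handling of the strict inequality $P_h^1>r^{-1}(R_0)/h$ via the $\epsilon$-margin. Where you genuinely differ is the first step. The paper does not use occupation measures on the constrained problem; instead it forms the Lagrangian relaxation \eqref{eq:upper-bound-problem1} with multipliers $(\lambda,\nu)$ for the power and update-rate constraints, verifies conditions W1--W3 of \cite{SRP-NEW} to conclude that this unconstrained average-reward MDP admits an optimal stationary randomized policy, performs the two-level reduction inside the Lagrangian problem, and finally recognizes the resulting problem \eqref{eq:stat_rand_CSIT_convex1} as the Lagrangian of the convex program \eqref{eq:stat_rand_CSIT_convex2}, i.e., of \eqref{eq:stat_rand_CSIT_convex} with $\alpha$ replaced by $2\alpha-1$, so that the optimal $(\lambda,\nu)$ are the optimal multipliers of that program. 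Your direct occupation-measure route exploits the fact that the state process is i.i.d.\ and uncontrolled, so the achievable triples of long-term average throughput, update rate, and power are linear functionals of a time-averaged joint channel--power distribution whose channel marginal is pinned to $f_h$; this avoids the duality bookkeeping the paper needs (existence of multipliers with no duality gap, which the paper settles only implicitly through convexity and Slater-type feasibility of the final program), at the cost of having to treat possibly non-convergent time averages via limit points --- precisely the obstacle you flag. Both routes are sound; yours is more self-contained, while the paper's leans on citable constrained-MDP machinery \cite{CMDP,SRP-NEW}.
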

\begin{proof}
	To prove the above theorem, we first note that \eqref{eq:upper-bound-problemCSIT} has the form of a constrained Markov decision process (MDP). We then transform it to an unconstrained MDP using the Lagrangian method and note that it will have a stationary randomized policy (SRP) as the optimal policy. We then show that the most general SRP one can employ in this case is similar to the AI-SRP proposed in the previous subsection. 
	See Appendix B for the details.  	
\end{proof}

We now have the following main result of this section, where we obtain multiplicative and additive bounds on the performance of the  AI-SRP in \eqref{eq:optPower}.

\begin{theorem}\label{thm:rp}
	The following bounds hold true for the long-term average throughput achieved by the  AI-SRP  in \eqref{eq:optPower}, obtained as the optimal solution to \eqref{eq:stat_rand_CSIT_convex}. 
	\begin{itemize}
		\item For any $(R_0, H, \alpha,\bar{P})$, we have, $R^*_{\rm CSIT}\leq 2R_{\rm CSIT}^{\rm AI\_SRP}$. 
		\item $R^*_{\rm CSIT}-R_{\rm CSIT}^{\rm AI\_SRP}\leq  \nu_{\alpha}(\alpha-1)$, where $\nu_{\alpha}$ is the optimal dual variable corresponding to the AoI constraint in \eqref{eq:outage}.
	\end{itemize}
\end{theorem}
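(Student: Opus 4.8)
The plan is to reduce both inequalities to a single comparison of the AI-SRP value function at the two arguments $2\alpha-1$ and $\alpha$, and then exploit concavity in $\alpha$. Writing $g(\alpha):=R_{\rm CSIT}^{\rm AI\_SRP}(\alpha)$ for the optimal value of \eqref{eq:stat_rand_CSIT_convex}, Proposition~\ref{prop:UB} gives $R^*_{\rm CSIT}\le U_{\rm CSIT}$ and Theorem~\ref{thm:UB-CSIT} gives $U_{\rm CSIT}=g(2\alpha-1)$, so that $R^*_{\rm CSIT}\le g(2\alpha-1)$. Since $R_{\rm CSIT}^{\rm AI\_SRP}=g(\alpha)$ by definition, both claimed bounds follow once I control $g(2\alpha-1)$ in terms of $g(\alpha)$: the multiplicative bound needs $g(2\alpha-1)\le 2g(\alpha)$, and the additive bound needs $g(2\alpha-1)-g(\alpha)\le \nu_\alpha(\alpha-1)$.

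First I would establish that $g$ is concave and non-negative on its feasible domain. This is the fact the preamble alludes to. Because the objective is concave in $(\mu_h,v_h^1,v_h^2)$ and all constraints are convex (the very property used to call \eqref{eq:stat_rand_CSIT_convex} jointly convex), the perturbation function $G(c)$, defined as the optimal value of \eqref{eq:stat_rand_CSIT_convex} when the right-hand side of \eqref{eq:outage} is replaced by $c$, is concave and non-increasing in $c$. Since $g(\alpha)=G(1/\alpha)$ is the composition of the concave non-increasing $G$ with the convex map $\alpha\mapsto 1/\alpha$, the composition rule yields that $g$ is concave in $\alpha$; and $g\ge 0$ because the objective is a sum of the non-negative terms $r(\cdot)$.

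For the multiplicative bound I would use the midpoint identity $\alpha=\tfrac12(2\alpha-1)+\tfrac12\cdot 1$. Concavity gives $g(\alpha)\ge \tfrac12 g(2\alpha-1)+\tfrac12 g(1)$, and since $g(1)\ge 0$ this yields $g(2\alpha-1)\le 2g(\alpha)$; chaining with $R^*_{\rm CSIT}\le g(2\alpha-1)$ gives $R^*_{\rm CSIT}\le 2R_{\rm CSIT}^{\rm AI\_SRP}$. For the additive bound I would argue by dual feasibility rather than by differentiating $g$ (which need not be smooth). Let $(\nu_\alpha,\theta^\star)$ be optimal dual variables of \eqref{eq:stat_rand_CSIT_convex} at parameter $\alpha$, with $\nu_\alpha\ge 0$ attached to \eqref{eq:outage}. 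The Lagrangians of the programs at $\alpha$ and at $2\alpha-1$ differ only by the constant coming from the right-hand side $1/\alpha$ versus $1/(2\alpha-1)$, so evaluating the dual of the $(2\alpha-1)$-program at the same point $(\nu_\alpha,\theta^\star)$ and invoking weak duality gives
\begin{align*}
g(2\alpha-1)\le g(\alpha)+\nu_\alpha\left(\frac1\alpha-\frac{1}{2\alpha-1}\right)=g(\alpha)+\nu_\alpha\,\frac{\alpha-1}{\alpha(2\alpha-1)}.
\end{align*}
Since $\alpha\ge 1$ forces $\alpha(2\alpha-1)\ge 1$, the coefficient is at most $\alpha-1$, and $R^*_{\rm CSIT}-R_{\rm CSIT}^{\rm AI\_SRP}\le\nu_\alpha(\alpha-1)$ follows.

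The step I expect to be the main obstacle is making the multiplicative bound rigorous at the endpoint $\alpha=1$: the midpoint argument needs $g(1)$ to be well defined, i.e.\ $\alpha=1$ (full reliability, $\int_{h\in\mathcal H}\mu_h f_h\,dh\ge 1$, forcing $\mu_h\equiv 1$) must lie in the feasible domain, which requires $\bar P\ge\int_{h\in\mathcal H}r^{-1}(R_0+\epsilon)\,h^{-1}f_h\,dh$. When this holds the entire interval $[1,\infty)$ is available and the concavity/midpoint argument closes; verifying (or assuming) this feasibility, and checking Slater's condition so that strong duality and the perturbation-function concavity hold, is the delicate part. The additive bound is comparatively safe, as it only compares $g$ at $\alpha$ and at $2\alpha-1$, both feasible whenever the original problem at $\alpha$ is.
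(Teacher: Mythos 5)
Your proposal is correct, and for the multiplicative bound it follows the paper's own route: both arguments establish concavity of $R_{\rm CSIT}^{\rm AI\_SRP}(\cdot)$ by viewing the optimal value as a concave non-increasing function of the constraint right-hand side $1/x$ (Boyd, Exercise 5.32) composed with the convex map $x\mapsto 1/x$, and then use nonnegativity at the endpoint $x=1$; your midpoint inequality $g(\alpha)\ge\tfrac12 g(2\alpha-1)+\tfrac12 g(1)$ is just a rearrangement of the paper's increment inequality $g(2\alpha-1)-g(\alpha)\le g(\alpha)-g(1)$. Where you genuinely depart from the paper is the additive bound: the paper's proof, as written, never addresses it at all (it stops after the multiplicative chain), whereas you supply an explicit argument --- evaluate the dual of the $(2\alpha-1)$-program at the optimal multipliers $(\nu_\alpha,\theta^\star)$ of the $\alpha$-program, observe that the two Lagrangians differ only by the constant $\nu_\alpha\bigl(\tfrac1\alpha-\tfrac1{2\alpha-1}\bigr)$, and invoke weak duality together with strong duality at $\alpha$. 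This both fills a genuine gap in the paper and yields the strictly tighter constant $\nu_\alpha\,\frac{\alpha-1}{\alpha(2\alpha-1)}\le\nu_\alpha(\alpha-1)$. Two caveats, both of which you correctly flag and both of which the paper shares implicitly: (i) strong duality (Slater's condition) at $\alpha$ is needed for $\nu_\alpha$ to exist and to close the weak-duality step --- the paper assumes this silently by speaking of ``the optimal dual variable''; (ii) the multiplicative bound requires $\alpha=1$ (full channel inversion) to be feasible so that $g(1)$ is finite, and this can genuinely fail --- e.g.\ for untruncated Rayleigh fading one has $\mathbb{E}[1/H]=\infty$, so your sufficient condition $\bar P\ge\int_{h\in\mathcal H} r^{-1}(R_0+\epsilon)\,h^{-1}f_h\,dh$ is a real restriction that the paper's proof also needs but does not state.
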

\begin{proof}
	Consider \eqref{eq:stat_rand_CSIT_convex} with $\alpha$ being replaced by $x$. Hence, when $x=\alpha$ and $x=2\alpha-1$, we obtain $R_{\rm CSIT}^{\rm AI\_SRP}$ and $U_{\rm CSIT}$ (follows from Theorem~\ref{thm:UB-CSIT}), respectively. 
	From Exercise 5.32 in \cite{Boyd}, since \eqref{eq:stat_rand_CSIT_convex} is an concave maximization problem, the optimal value function, $R_{\rm CSIT}^{\rm AI\_SRP}(\cdot)$ is a concave non-increasing function of $1/x$. Moreover, since $1/x$ is convex in $[1,\infty)$ and because the composition of a convex function and a concave non-increasing function is a  concave function (see Section 3.2 in \cite{Boyd}), we conclude that $R_{\rm CSIT}^{\rm AI\_SRP}(x)$ is concave in $x$. Since, increasing the $x$ is relaxing the average AoI constraint, $R_{\rm CSIT}^{\rm AI\_SRP}(x)$ must be non-decreasing. Now, noting that for any concave function, $g(x),~x\in\mathbb{R}$, the inequality,  $g(y+\delta)-g(y)\leq g(x+\delta)-g(x)$ holds, for all $y\geq x$ and $\delta\geq 0$, and letting, $y=\alpha$, $x=1$ and $\delta=\alpha-1$, we have, $R_{\rm CSIT}^{\rm AI\_SRP}(2\alpha-1)-R_{\rm CSIT}^{\rm AI\_SRP}(\alpha)\leq R_{\rm CSIT}^{\rm AI\_SRP}(\alpha)-R_{\rm CSIT}^{\rm AI\_SRP}(1)$. Since, $R_{\rm CSIT}^{\rm AI\_SRP}(1)\geq 0$, we have  
	\begin{align}\label{eq:R21}
	R_{\rm CSIT}^{\rm AI\_SRP}(2\alpha-1)\leq 2 R_{\rm CSIT}^{\rm AI\_SRP}(\alpha). 
	\end{align}
	Hence, we have
	\begin{align*}
	\frac{R^*_{\rm CSIT}}{  R_{\rm CSIT}^{\rm AI\_SRP}}\leq \frac{U_{\rm CSIT}}{R_{\rm CSIT}^{\rm AI\_SRP}(\alpha)}=\frac{R_{\rm CSIT}^{\rm AI\_SRP}(2\alpha-1)}{R_{\rm CSIT}^{\rm AI\_SRP}(\alpha)}\leq 2,  
	\end{align*}
	where the last inequality follows from \eqref{eq:R21}. 
\end{proof}
We now make the following comments on the above theorem. 
For any parameter tuple of the problem, $(R_0, H, \alpha,\bar{P})$, the first bound says that the maximum long-term average throughput of AI-SRP is at least half of the optimal long-term average throughput. 
Moreover, for a specific parameter tuple, $(R_0, H, \alpha,\bar{P})$, we can obtain a potentially better bound from the  second result. 
For this, we need to find the optimal dual variables, $\nu_{\alpha}$,  corresponding to the average AoI constraint.

\section{No CSIT Case}\label{sec: NO CSIT}
We now consider the case when the realization of the channel power gain, $h(k)$ is not available at the start of slot $k$. In this case, since we do not have knowledge of the instantaneous channel state information, we allocate the transmit powers based only on the knowledge of the distribution of channel power gain, $H$, by adopting the following broadcast strategy. 

\subsubsection{Broadcast Strategy}
As pointed out in the system model section, in this case, we assume $H$ to be a discrete random variable, which can take $N$ non-zero discrete values,  such that $H=h_i$ with probability $p_i>0$ for all $i\in \{1,\ldots,N\}$,  where $h_i$'s are ordered such that  $0<h_1<h_2<\ldots<h_N$, and $\Sigma_{i=1}^Np_i=1$.

In the broadcast strategy, in a fading block, we first design $N$ codebooks   with rates, 
\begin{align}\label{eq:rate_lsc}
R_{i}=r\left(\frac{h_iP_i}{1+h_i\sum_{j=i+1}^{N}P_j}\right),\ \forall\; i=1,\ldots,N,
\end{align}
where $P_i$ is the constant transmit power of codewords in the $i^{\rm th}$ codebook,    $r(x)$ is the maximum rate of reliable communication when the received signal-to-noise ratio is $x$, 
and   $r(\cdot)$ is an invertible, concave, strictly increasing function defined over $[0,\infty)$ such that $r(0)=0$. 
 We then superimpose and transmit $N$ codewords, one from each of the above codebooks. We refer to the transmitted codeword with rate $R_i$ (from the $i^{\rm th}$ codebook), corresponding to the channel power gain of $h_i$,  as the $i^{\rm th}$ layer. 
At the receiver, we adopt the following successive interference cancellation (SIC) decoding:  We first decode the $1^{\rm st}$  layer, treating the symbols from the other higher layers as noise. Then, the decoded symbols from the $1^{\rm st}$ layer are subtracted from the corresponding symbols of the received codeword, following which  the layer $2$ is decoded treating the higher layers as   noise. This process is continued until the last layer has been decoded.



We now obtain the average achievable rate of the above broadcast strategy.  When the channel power gain, $H=h_j$, the received signal power corresponding to  the $i^{\rm th}$ codeword is $h_jP_i$ for $i,\,j=1,\ldots,N$. In this case, in layer $i$, the maximum rate of reliable communication is $C_{i}^j\triangleq r(1+h_jP_i/(1+h_j\sum_{k=i+1}^{N}P_k))$ for $i,\,j=1,\ldots,N$. This implies that when $H=h_j$, we can reliably decode only the layers $1,\ldots,j$,  as $R_i\leq C_i^j$ for $i=1,\ldots,j$. For layers $i=j+1,\ldots,N$, we have, $R_i> C_i^j$ and hence we cannot reliably decode them. 
Hence,  the number of bits that can be reliably decoded when $H = h_j$ is $\sum_{i=1}^{j}R_{i}$ bits. Note that in the above broadcast technique,  the higher the channel power gain, the higher is the  average rate, and this shows that the broadcast strategy facilitates  adaptation of  communication rates to the realized channel state,  without requiring the CSIT.

Since $\text{prob}[H = h_j] = p_j$, the number of bits reliably decoded, $
R_s$, is a random variable with the following distribution: $R_s=\sum_{i=1}^{j}R_{i}$ with probability $p_j$. Hence, the average achievable rate is $\mathbb{E}[R_s]=\sum_{i=1}^{N}q_iR_{i}$, where $q_i\triangleq\sum_{j=i}^{N}p_j$. 
From \eqref{eq:rate_lsc}, we note that the total transmit power, $\sum_{i=1}^{N}P_i$, and the rates, $R_1,\ldots,R_N$, are related as follows: 
\begin{align}\label{eq:sumPower}
\sum_{i=1}^{N}P_i=\sum_{i=1}^{N}h_i^{-1}r^{-1}(R_i)\Pi_{l=1}^{i-1}\left(r^{-1}(R_l)+1\right). 
\end{align}
Hence, given the total transmit power, $P$, in order to maximize the average rate,   $\mathbb{E}[R_s]$, we need to solve the following optimization problem: 
\begin{subequations}\label{eq:LSC} 
	\begin{align}
	\underset{R_1,\ldots,R_N \geq 0}{\text{max}} &\;\; \sum_{i=1}^{N}q_iR_i, \;\;&\\
	\text{subject to}&\;\;\sum_{i=1}^{N}h_i^{-1}r^{-1}(R_i)\Pi_{l=1}^{i-1}\left(r^{-1}(R_l)+1\right)\leq P.\label{eq:uncon-avgP}
	\end{align}
\end{subequations} 
We now comment on the convexity of \eqref{eq:LSC}. Clearly, the objective function is convex. In \eqref{eq:uncon-avgP}, we encounter products of the form $r^{-1}(R_1)r^{-1}(R_2)\ldots r^{-1}(R_i)$ for $i=1,\ldots,N$. Since $r(\cdot)$ is invertible, strictly increasing and concave, its inverse, $r^{-1}(\cdot)$, is convex and non-decreasing. Moreover, $r^{-1}(\cdot)$ is positive. Now, since the product of convex, non-decreasing positive functions is convex,  $r^{-1}(R_1)r^{-1}(R_2)\ldots r^{-1}(R_i)$ is convex \cite{Boyd}, and hence \eqref{eq:uncon-avgP} is convex. This implies that \eqref{eq:LSC} is convex and it can be solved using standard numerical techniques.  However, when $r(x)=\log(1+x)$, based on the KKT conditions, a simple analytical solution to \eqref{eq:LSC} has been obtained in \cite{NoCSIT}, which we present in Appendix C. We will use  this solution to propose an iterative algorithm for the AI-SRP presented later in the section, when $r(x)=\log(1+x)$.

\subsubsection{Reformulation of \eqref{eq:main-opt-problem} and \eqref{eq:upper-bound-problem} Under the Broadcast Strategy}
When no CSIT is available, we apply the broadcast strategy in \eqref{eq:rate_lsc} in every block for transmission. Let $R_i(k)$ and $P_i(k)$, respectively, be the transmit rate and power for layer $i$ in block $k$. Then the expected throughput in block $k$ with the broadcast strategy is given by $\mathbb{E}[R_s(k)] = \sum_{i=1}^{N}q_iR_i(k)$.  
Now, under the broadcast strategy,  \eqref{eq:main-opt-problem} can be re-written as 
\begin{subequations}\label{eq:main-opt-problemBC}
	\begin{align}
	R^*_{\rm NoCSIT}=\underset{P_i(1), P_i(2), \ldots}{\text{max}} &\;\; \lim_{K\rightarrow \infty}\frac{1}{K}\sum_{k=1}^{K}q_i\mathbb{E}[R_i(k)],  \;\;\label{eq:throughputBC}\\
	\text{subject to}&\;\; \eqref{eq:avg-AoI}, \lim_{K\rightarrow \infty}\frac{1}{K}\sum_{k=1}^{K}\mathbb{E}\left[\sum_{i=1}^{N}P_i(k)\right]\leq \bar{P}, \label{eq:avg}
	\end{align}
\end{subequations}
and the upper-bound problem in \eqref{eq:upper-bound-problem} can be re-written as
\begin{subequations}\label{eq:upper-bound-problemBC}
	\begin{align}
	U_{\rm NoCSIT} = \underset{P(1), P(2), \ldots}{\text{max}} &\;\; \lim_{K\rightarrow \infty}\frac{1}{K}\sum_{k=1}^{K}q_i\mathbb{E}[R_i(k)],  \;\;\\
	\text{subject to}&\;\; \eqref{eq:NofUpdates},  \lim_{K\rightarrow \infty}\frac{1}{K}\sum_{k=1}^{K}\mathbb{E}\left[\sum_{i=1}^{N}P_i(k)\right]\leq \bar{P}.\label{eq:avgPubBC} 
	\end{align}
\end{subequations}\\

In the remaining part of the section, we will first propose a simple  AI-SRP under the broadcast strategy. 
We then present an equivalent problem to \eqref{eq:upper-bound-problemBC} to obtain an upper-bound to  \eqref{eq:main-opt-problemBC} under the broadcast strategy. We finally present  multiplicative and additive bounds on the maximum achievable throughput of the proposed AI-SRP.  We also obtain a simple iterative solution to AI-SRP when  $r(x)=\log(1+x)$ in Appendix D, based on the discussion in Appendix C.

\subsection{AI-SRP Under the Broadcast Strategy}\label{eq:AISRP-NoCSIT}
In this policy, we design $N$ rate tuples, each with $N$ elements, represented by $(R_1^j,\ldots,R_N^j)$, such that 
\begin{align*}
R_{i}^j=r\left(\frac{h_iP_i^j}{1+h_i\sum_{k=i+1}^{N}P_k^j}\right),\ \forall\; i=1,\ldots,N,
\end{align*}
where $\sum_{i=1}^{j}R_i^j> R_0$, $\sum_{i=1}^{j-1}R_i^j\leq R_0$, for all $j=1,\ldots,N$ and $P_i^j$ is the transmit power corresponding to the $i^{\rm th}$ entry in the $j^{\rm th}$ rate tuple.  
In any given fading block, we choose the $j^{\text{th}}$ rate tuple, $(R_1^j,\ldots,R_N^j)$, with probability $\mu_j$. We then design $N$ codebooks in which the rate of the $i^{\rm th}$ codebook is $R_i^j$ and select $N$ codewords (layers), one   from each of the codebooks,  superimpose and transmit them. At the receiver, the SIC decoding is adopted.

We now specialize \eqref{eq:main-opt-problemBC} for the above AI-SRP. Conditioned on the event that codewords generated from the $j^{\text{th}}$ rate-tuple have been transmitted, the average achievable throughput is equal to   $\sum_{i=1}^{N}q_iR_i^j$. Hence, the overall average throughput, averaged over all the possible rate-tuples, is given by $\sum_{j=1}^{N}\mu_j\sum_{i=1}^{N}q_iR_i^j=\sum_{j=1}^{N}\sum_{i=1}^{N}\mu_jq_iR_i^j$.  
 Recall that an update is deemed to be successful if the number of bits delivered in the update is greater than   $R_0$. In the above broadcast strategy, when  the $j^{\text{th}}$ rate-tuple is transmitted and   when the channel power gain is $H = h_k$, the maximum achievable throughput is equal to $\sum_{i=1}^kR_i^j$ and, by design, we have, $\sum_{i=1}^lR_i^j> R_0$ for all $l\geq j$.  
 Hence, conditioned on the event that $j^{\text{th}}$ rate-tuple is transmitted, the probability of success,  $\mathbb{P}(\text{success}|j^{\text{th}}~\text{rate-tuple is transmitted})=q_j$,  as $\mathbb{P}(H\geq h_j)=q_j$. 
 Hence, the probability of success is given by  $\sum_{j=1}^{N}q_j\mu_j$ and, along the lines in the perfect CSIT case in the previous section,  the average AoI constraint in \eqref{eq:avg-AoI} can be re-written as  $\sum_{j=1}^{N}q_j\mu_j\geq 1/\alpha$. 
Finally, recalling that the transmit power for layer $i$, corresponding to the $j^{\text{th}}$ rate-tuple is $P_i^j$, the long-term average power is given by $\sum_{j=1}^{N}\mu_j\sum_{i=1}^{N}P_i^j$. From \eqref{eq:sumPower}, it follows that 
$\sum_{i=1}^{N}P_i^j= \sum_{i=1}^{N}h_i^{-1}r^{-1}(R_i^j)\Pi_{l=1}^{i-1}\left(r^{-1}(R_l^j)+1\right)$.  
Now, under the above AI-SRP, \eqref{eq:main-opt-problemBC} can be specialized to the following optimization problem:   
\begin{subequations}\label{eq:BC-Rate-AoI-21}
	\begin{align}
	\underset{R_1^j,\ldots,R_N^j, \mu_j}{\text{max}} &\;\; \sum_{j=1}^{N}\sum_{i=1}^{N}\mu_jq_iR_i^j, \;\;&\label{eq:AI-SRPobj}\\
	\text{subject to}&\;\sum_{j=1}^{N}\mu_j\left(\sum_{i=1}^{N}h_i^{-1}r^{-1}(R_i^j)\Pi_{l=1}^{i-1}\left(r^{-1}(R_l^j)+1\right)\right)\leq \bar{P},\;\;\label{eq:avgPBC} \\
	&\; \sum_{j=1}^{N}q_j\mu_j\geq \frac{1}{\alpha},\; \sum_{j=1}^{N}\mu_j=1,\label{eq:avgAoIBC}\\
		&\; R_i^j,\; \mu_j\geq 0, \;  \sum_{i=1}^{j}R_i^j\geq R_0+\epsilon,\label{eq:AI-SRPBC-d}
	\end{align}
\end{subequations}
for all $j,i=1,\ldots,N$, for an arbitrarily small $\epsilon>0$. The inequalities in  \eqref{eq:avgPBC} and \eqref{eq:avgAoIBC} are the average power and AoI constraints, respectively. 
The above problem is non-convex due to coupling between the variables. Defining $v_i^j\triangleq \mu_jR_i^j$, the above problem can be transformed to the following jointly convex problem:  
\begin{subequations}\label{eq:BC-Rate-AoI-22}
	\begin{align}
	R_{\rm NoCSIT}^{\rm AI\_SRP}(\alpha)=
	\underset{v_1^j,\ldots,v_N^j, \mu_j}{\text{max}} &\;\; \sum_{j=1}^{N}\sum_{i=1}^{N}q_iv_i^j, \;\;&\\
	\text{subject to}&\;\;\sum_{j=1}^{N} \mu_j\left(\sum_{i=1}^{N}h_i^{-1}r^{-1}\left(\frac{v_i^j}{\mu_j}\right)\Pi_{l=1}^{i-1}\left(r^{-1}\left(\frac{v_l^j}{\mu_j}\right)+1\right)\right)\leq \bar{P},\;\;\label{eq:avgPBC1} \\
	&\;\; \sum_{j=1}^{N}q_j\mu_j\geq \frac{1}{\alpha},\; \sum_{j=1}^{N}\mu_j\leq 1,\label{eq:outageBC}\\
		&\;\; v_i^j,\mu_j\geq 0,\;\forall j,i=1,\ldots,N,\\
	&\;\;  \sum_{i=1}^{j}v_i^j\geq \mu_j(R_0+\epsilon),\;\forall j=1,\ldots,N. 
	\end{align}
\end{subequations} 
Since the optimization problem \eqref{eq:BC-Rate-AoI-22} is jointly convex, it can be solved using standard numerical techniques for any  invertible concave increasing $r(\cdot)$, defined over $[0,\infty)$, such that $r(0)=0$. However, when $r(x)=\log(1+x)$, in Appendix D, we obtain a simple iterative algorithm for solving \eqref{eq:BC-Rate-AoI-22}.

\subsection{Upper Bound}
In the following theorem, we express the optimal objective value of  the   upper-bound problem in  \eqref{eq:upper-bound-problemBC} in terms of the optimal objective value, $R_{\rm NoCSIT}^{\rm AI\_SRP}(\cdot)$	 of the AI-SRP policy in \eqref{eq:BC-Rate-AoI-22}.

\begin{theorem}\label{thm:NoCSIT}
	In the   no CSIT case, when the broadcast strategy is adopted, the optimal objective value of the upper-bound problem in  \eqref{eq:upper-bound-problemBC},
	\begin{align*}
	U_{\rm NoCSIT}=R_{\rm NoCSIT}^{\rm AI\_SRP}(2\alpha-1), 
	\end{align*}
	where $R_{\rm NoCSIT}^{\rm AI\_SRP}(\cdot)$ is given by  \eqref{eq:BC-Rate-AoI-22}. 
\end{theorem}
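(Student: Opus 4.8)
The plan is to mirror the argument used for the perfect-CSIT case in Theorem~\ref{thm:UB-CSIT} (Appendix B), adapting it to the broadcast action space. First I would observe that \eqref{eq:upper-bound-problemBC} is a constrained MDP whose only constraints, namely the average-success constraint \eqref{eq:NofUpdates} and the average-power constraint \eqref{eq:avgPubBC}, are long-term time averages, and whose channel state $H(k)$ is i.i.d. and, crucially, unobservable at the decision instant because no CSIT is available. Passing to the Lagrangian relaxation (dualizing both average constraints) turns \eqref{eq:upper-bound-problemBC} into an unconstrained average-cost MDP; by standard CMDP theory this relaxed problem admits a stationary randomized policy as an optimal policy, and since it contains no age state (the AoI does not enter \eqref{eq:upper-bound-problemBC} at all), the optimal stationary policy is simply a fixed randomization over the broadcast action, i.e. over the power vectors $(P_1,\ldots,P_N)$, applied independently in every block.

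Next I would characterize the most general such stationary randomized policy in terms of the variables appearing in \eqref{eq:BC-Rate-AoI-22}. Any broadcast power vector induces, via \eqref{eq:rate_lsc}, a rate vector $(R_1,\ldots,R_N)$, and the event $u(k)=1$ occurs exactly when the realized channel index is at least the smallest $j$ for which $\sum_{i=1}^{j}R_i>R_0$; hence the per-block success probability of an action whose success threshold is $j$ equals $q_j=\mathbb{P}(H\ge h_j)$. Grouping the randomization by this \emph{exact} threshold, I would let $\mu_j$ be the probability of playing a threshold-$j$ action and let $(R_1^j,\ldots,R_N^j)$ be the associated rate vector; by convexity of \eqref{eq:LSC} it is without loss of optimality to use a single rate vector per threshold. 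This recovers exactly the decision variables of the AI-SRP: the objective becomes $\sum_j\sum_i q_i\mu_j R_i^j$, the power expenditure becomes $\sum_j\mu_j\sum_i P_i^j$ with $\sum_i P_i^j$ given by \eqref{eq:sumPower}, and the long-term average number of successes $\lim_{K\to\infty}\tfrac1K\sum_k\mathbb{E}[u(k)]$ equals $\sum_j q_j\mu_j$.

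I would then simply match the constraints. Substituting the success rate $\sum_j q_j\mu_j$ into \eqref{eq:NofUpdates} gives $\sum_j q_j\mu_j\ge \tfrac1{2\alpha-1}$, which is precisely constraint \eqref{eq:outageBC} of \eqref{eq:BC-Rate-AoI-22} with $\alpha$ replaced by $2\alpha-1$; the objective and the power constraint \eqref{eq:avgPBC1} are unchanged. After the convexifying substitution $v_i^j\triangleq\mu_j R_i^j$ (exactly as in passing from \eqref{eq:BC-Rate-AoI-21} to \eqref{eq:BC-Rate-AoI-22}), the maximization defining $U_{\rm NoCSIT}$ becomes literally \eqref{eq:BC-Rate-AoI-22} evaluated at $2\alpha-1$, yielding $U_{\rm NoCSIT}=R_{\rm NoCSIT}^{\rm AI\_SRP}(2\alpha-1)$. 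Both inequalities between the two optimal values are clean: any feasible point of \eqref{eq:BC-Rate-AoI-22} at $2\alpha-1$ induces a policy whose true success rate is at least $\sum_j q_j\mu_j\ge\tfrac1{2\alpha-1}$ and hence is feasible for \eqref{eq:upper-bound-problemBC}, while conversely labelling the optimal SRP by exact threshold makes $\sum_j q_j\mu_j$ equal to the true success rate.

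The step I expect to be the main obstacle is the reduction in the second paragraph: justifying rigorously that an optimal policy for the Lagrangian-relaxed MDP may be taken to be a single fixed randomization over broadcast power vectors, and that grouping these actions by their success threshold and collapsing each group to one rate vector loses nothing. This is where the i.i.d., unobservable nature of the channel and the convexity of \eqref{eq:LSC} must be used carefully; the remaining constraint-matching and the convexifying substitution are routine and are inherited directly from the perfect-CSIT argument in Appendix B.
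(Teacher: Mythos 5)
Your proposal is correct and follows essentially the same route as the paper's Appendix E: classify the most general stationary randomized broadcast policy by its exact success threshold $j$ (so that the long-term success rate is $\sum_j \mu_j q_j$), collapse each threshold class to a single rate tuple---the conditional mean, which preserves the throughput by linearity and can only reduce the average power by Jensen's inequality applied to the convex power expression \eqref{eq:sumPower}---and then match the relaxed constraint \eqref{eq:NofUpdates}, i.e., $\sum_j q_j\mu_j \ge 1/(2\alpha-1)$, to recover \eqref{eq:BC-Rate-AoI-22} evaluated at $2\alpha-1$. The only minor deviation is that you justify optimality of a stationary randomized policy by rerunning the Lagrangian/CMDP machinery of Appendix B, whereas the paper's Appendix E obtains stationarity directly from the observation that with no CSIT the channel state is never observed and the i.i.d.\ channel evolution is unaffected by the actions; both justifications are valid, yours being somewhat heavier than necessary.
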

\begin{proof}
	To prove the above theorem, we first argue that \eqref{eq:upper-bound-problemBC} must have a stationary randomized policy (SRP) as the optimal policy. We then show that the most general SRP one can employ in this case is similar to the AI-SRP proposed in the previous subsection. 
See Appendix E for the details. 	
\end{proof}

\subsection{Multiplicative and Additive Bounds on AI-SRP}
In the following theorem, we present multiplicative and additive bounds on the AI-SRP. 

\begin{theorem}\label{thm:rpBC}
Under the broadcast strategy, 	the following bounds hold true:
	\begin{itemize}
		\item For any $(R_0, H, \alpha,\bar{P})$, we have, $R^*_{\rm NoCSIT}\leq 2R_{\rm NoCSIT}^{\rm AI\_SRP}$. 
		\item $R^*_{\rm NoCSIT}-R_{\rm NoCSIT}^{\rm AI\_SRP}\leq  \nu_{\alpha}(\alpha-1)$, where $\nu_{\alpha}$ is the optimal dual variable corresponding to the average AoI constraint in \eqref{eq:outageBC}.
	\end{itemize}
\end{theorem}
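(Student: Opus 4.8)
The plan is to mirror the argument used for the perfect CSIT case in Theorem~\ref{thm:rp}, since \eqref{eq:BC-Rate-AoI-22} plays exactly the role that \eqref{eq:stat_rand_CSIT_convex} played there. First I would replace $\alpha$ by a free parameter $x$ throughout \eqref{eq:BC-Rate-AoI-22}, so that the AoI constraint in \eqref{eq:outageBC} reads $\sum_{j=1}^{N} q_j \mu_j \ge 1/x$, and denote the resulting optimal value by $R_{\rm NoCSIT}^{\rm AI\_SRP}(x)$. By construction $x=\alpha$ returns $R_{\rm NoCSIT}^{\rm AI\_SRP}$, and by Theorem~\ref{thm:NoCSIT} the value $x=2\alpha-1$ returns $U_{\rm NoCSIT}$.

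The key structural fact I would establish is that $R_{\rm NoCSIT}^{\rm AI\_SRP}(x)$ is concave and non-decreasing in $x$. Since \eqref{eq:BC-Rate-AoI-22} is a jointly convex (concave-maximization) program, its optimal value, viewed as a function of the right-hand side $1/x$ of the AoI constraint, is concave and non-increasing in $1/x$ (Exercise 5.32 in \cite{Boyd}); it is non-increasing because raising $1/x$ tightens the constraint. As $1/x$ is convex on $[1,\infty)$, composing it with a concave non-increasing function yields a concave function of $x$ (Section 3.2 in \cite{Boyd}), and $R_{\rm NoCSIT}^{\rm AI\_SRP}(x)$ is non-decreasing in $x$ because increasing $x$ relaxes the AoI constraint.

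For the multiplicative bound I would apply the standard concave-increment inequality $g(y+\delta)-g(y)\le g(x+\delta)-g(x)$ for $y\ge x$ and $\delta\ge 0$, with $y=\alpha$, $x=1$, $\delta=\alpha-1$, to obtain $R_{\rm NoCSIT}^{\rm AI\_SRP}(2\alpha-1)-R_{\rm NoCSIT}^{\rm AI\_SRP}(\alpha)\le R_{\rm NoCSIT}^{\rm AI\_SRP}(\alpha)-R_{\rm NoCSIT}^{\rm AI\_SRP}(1)$. Because $R_{\rm NoCSIT}^{\rm AI\_SRP}(1)\ge 0$, this gives $R_{\rm NoCSIT}^{\rm AI\_SRP}(2\alpha-1)\le 2R_{\rm NoCSIT}^{\rm AI\_SRP}(\alpha)$, and chaining with $R^*_{\rm NoCSIT}\le U_{\rm NoCSIT}=R_{\rm NoCSIT}^{\rm AI\_SRP}(2\alpha-1)$ yields the factor-two bound. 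For the additive bound I would use the supporting-line property of the concave value function at $x=\alpha$: the gap $R_{\rm NoCSIT}^{\rm AI\_SRP}(2\alpha-1)-R_{\rm NoCSIT}^{\rm AI\_SRP}(\alpha)$ is at most the one-sided slope at $\alpha$ times the step $(2\alpha-1)-\alpha=\alpha-1$, and I would identify that slope with the optimal dual variable $\nu_{\alpha}$ of the AoI constraint in \eqref{eq:outageBC} via the standard Lagrangian sensitivity interpretation, so that $R^*_{\rm NoCSIT}-R_{\rm NoCSIT}^{\rm AI\_SRP}\le \nu_{\alpha}(\alpha-1)$.

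The main obstacle I anticipate is this last step: making the identification of the value function's derivative with $\nu_{\alpha}$ rigorous. The sensitivity result $\partial R/\partial b = -\nu$ applies to perturbations of a constraint's right-hand side $b$, whereas the natural parameter here is $x$ with $b=1/x$; I would therefore either carry the estimate through in terms of $1/x$ and verify that the resulting gap is dominated by $\nu_{\alpha}(\alpha-1)$, or invoke differentiability (subdifferentiability) of the concave value function to justify the slope-dual-variable identification directly. This is exactly the same technical point that arises in Theorem~\ref{thm:rp}, and the argument transfers essentially verbatim, the only change between the two cases being the specific (still jointly convex) form of the AI-SRP program.
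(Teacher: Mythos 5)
Your proposal is correct and follows essentially the same route as the paper: the paper's own proof of Theorem~\ref{thm:rpBC} simply states that the bounds follow from concavity of the optimal throughput in $\alpha$, argued exactly along the lines of Theorem~\ref{thm:rp}, which is precisely the transfer you carry out (parametrize \eqref{eq:BC-Rate-AoI-22} by $x$, invoke Theorem~\ref{thm:NoCSIT} for $U_{\rm NoCSIT}=R_{\rm NoCSIT}^{\rm AI\_SRP}(2\alpha-1)$, establish concavity via the Boyd composition argument, and apply the concave-increment inequality and dual-variable sensitivity). Your explicit attention to justifying the slope--dual-variable identification for the additive bound is a point the paper itself leaves implicit, so your write-up is, if anything, more careful than the paper's.
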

\begin{proof}
The above results follow  from the concavity of the optimal long-term average throughput in $\alpha$ and they can be proved along the lines in the proof of Theorem~\ref{thm:rp}.  
\end{proof}

\section{Numerical Results}\label{sec: numerical results}
In this section,  we obtain simulation results, for which we consider an exponentially distributed channel power gain $H$ with unit mean with the PDF,  $f_h = e^{-h}$.  We truncate  $f_h$ at $h=h_{\rm max}$ and quantize $h$ to $N$ evenly spaced levels in $[0,h_{\rm max}]$, obtaining $h_i=ih_{\rm max}/N$ with probability $p_i=\int_{h=({i-1})h_{\rm max}/N}^{i h_{\rm max}/N}f_hdh$ for $i=1,\ldots,N-1$, $h_N=h_{\rm max}$ and  $p_N=\int_{h=({N-1})h_{\rm max}/N}^{\infty}f_hdh$. We let $h_{\rm max}=5$ for all the plots. 

\begin{figure*}[t]
	\centering
	\begin{subfigure}{0.48\textwidth}
%
%
\definecolor{mycolor1}{rgb}{0.00000,0.44700,0.74100}%
\definecolor{mycolor2}{rgb}{0.85000,0.32500,0.09800}%
\begin{tikzpicture}[scale=0.5]

\begin{axis}[%
width=5.425in,
height=3.152in,
at={(0.91in,0.548in)},
scale only axis,
xmin=1.2,
xmax=3.2,
xminorticks=true,
xlabel style={font=\bfseries\color{white!15!black}},
xlabel={$\alpha$},
ymin=0.95,
ymax=1.25,
ylabel style={font=\bfseries\color{white!15!black}},
ylabel={$M\triangleq U_x/R_x^*$},
axis background/.style={fill=white},
xmajorgrids,
xminorgrids,
ymajorgrids,
legend style={legend cell align=left, align=left, draw=white!15!black}
]
\addplot [color=mycolor1, line width=1.5pt, mark=o, mark options={solid, mycolor1},mark size = 4pt]
  table[row sep=crcr]{%
1.19935394620923	1.23120745078889\\
1.27427498570313	1.10766021139163\\
1.35387618002254	1.0530447645413\\
1.43844988828766	1.0212446127502\\
1.52830673265877	1.00954255065303\\
1.62377673918872	1.00278575245308\\
1.72521054994204	1.00073860141266\\
1.83298071083244	1.00010658508718\\
1.94748303990876	0.999999998168409\\
2.06913808111479	1.00000000179732\\
2.19839264886229	0.999999982173546\\
2.33572146909012	0.999999983593328\\
2.48162892283683	1.00000000738378\\
2.63665089873036	1.00000001412566\\
2.80135676119887	0.999999996786441\\
2.97635144163132	1.00000001160685\\
3.16227766016838	0.999999998219874\\
};
\addlegendentry{With CSIT, $\bar{P}=1,~R_0=1,\; N=50$}

\addplot [color=mycolor2, line width=1.5pt, mark=asterisk, mark options={solid, mycolor2}, mark size = 4pt]
  table[row sep=crcr]{%
1.27427498570313	1.13868450399839\\
1.35387618002254	1.05131763840591\\
1.43844988828766	1.01876908396042\\
1.52830673265877	1.00631365894461\\
1.62377673918872	1.00167650598641\\
1.72521054994204	1.00000000946637\\
1.83298071083244	1.00000001124829\\
1.94748303990876	1.00000001386022\\
2.06913808111479	1.00000000428868\\
2.19839264886229	1.00000000690131\\
2.33572146909012	0.999999994733004\\
2.48162892283683	1.00000001207785\\
2.63665089873036	0.999999999419563\\
2.80135676119887	1.00000000025071\\
2.97635144163132	1.0000000006903\\
3.16227766016838	1.00000000478261\\
};
\addlegendentry{No CSIT, $\bar{P}=5,~R_0=1,\; N=50$}

\end{axis}
\end{tikzpicture}%
		\caption{Multiplicative bound: $R_{x}^{\rm AI\_SRP}\geq R^*_x/M$, $x\in \{\rm CSIT, NoCSIT\}$.}
		\label{fig:CR1}
	\end{subfigure} 
	\hfill
	\begin{subfigure}{0.48\textwidth}
%
%
\definecolor{mycolor1}{rgb}{0.00000,0.44700,0.74100}%
\definecolor{mycolor2}{rgb}{0.85000,0.32500,0.09800}%
\begin{tikzpicture}[scale=0.5]

\begin{axis}[%
width=5.425in,
height=3.152in,
at={(0.91in,0.548in)},
scale only axis,
xmin=1.27,
xmax=3.2,
xlabel style={font=\bfseries\color{white!15!black}},
xlabel={$\alpha$},
ymin=0,
ymax=0.16,
ylabel style={font=\bfseries\color{white!15!black}},
yticklabel style={
	/pgf/number format/fixed,
	/pgf/number format/precision=5
},
ylabel={$A\triangleq\nu_{\alpha}(\alpha-1)$},
axis background/.style={fill=white},
xmajorgrids,
ymajorgrids,
legend style={legend cell align=left, align=left, draw=white!15!black}
]
\addplot [color=mycolor1, line width=1.5pt, mark=o, mark options={solid, mycolor1}, mark size = 4pt]
  table[row sep=crcr]{%
1.94748303990876	0.157414127178749\\
2.04421380004495	0.148121872065977\\
2.14574914115299	0.141373322055233\\
2.25232770498738	0.0662510373159511\\
2.36419998655008	0.066243203326811\\
2.48162892283683	0.0231512623839274\\
2.60489051082643	0.0234701348160011\\
2.73427445616523	0.00302643904459732\\
2.87008485407157	0.00309565831009184\\
3.01264090406045	2.74491540608324e-09\\
3.16227766016838	8.80673756142869e-10\\
};
\addlegendentry{With CSIT, $\bar{P}=1,\; R_0=1,\; N=50$.}

\addplot [color=mycolor2, line width=1.5pt, mark=asterisk, mark options={solid, mycolor2}, mark size = 4pt]
  table[row sep=crcr]{%
1.32145576997553	0.0193575589529171\\
1.38709198785093	0.0172125306298847\\
1.45598833231919	0.0156355722265533\\
1.52830673265877	0.0143936108439107\\
1.60421716111532	0.0134468843582209\\
1.68389803239289	0.0127108436814982\\
1.76753662298767	0.0121311411931124\\
1.85532951134996	0.0116780826188254\\
1.94748303990876	0.011327188443782\\
2.04421380004495	0.0110592324279493\\
2.14574914115299	0.0108590320638502\\
2.25232770498738	8.27923711810286e-08\\
2.36419998655008	4.93428163217402e-07\\
2.48162892283683	2.97843832708367e-07\\
2.60489051082643	1.58698908681032e-07\\
2.73427445616523	1.81465198423325e-08\\
2.87008485407157	1.20734079178675e-07\\
3.01264090406045	3.82382587815755e-08\\
3.16227766016838	4.14073126897563e-08\\
};
\addlegendentry{No CSIT, $\bar{P}=1.75,\; R_0=0.5,\; N=10$.}

\end{axis}
\end{tikzpicture}%
		\caption{Additive bound: $R_x^{\rm AI\_SRP}\geq R^*_x-A$, $x\in \{\rm CSIT, NoCSIT\}$.}
		\label{fig:CR2}
	\end{subfigure}	
	\caption{The variation of multiplicative and additive bounds on the optimal objective values, $R_{\rm CSIT}^{\rm AI\_SRP}$ and $R_{\rm NoCSIT}^{\rm AI\_SRP}$,  of the proposed AI-SRPs under the perfect and no CSIT cases, respectively, with the average AoI, $\alpha$.}
	\label{fig:bounds}
\end{figure*}
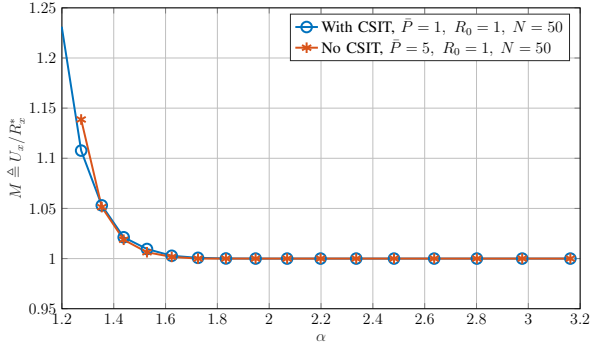
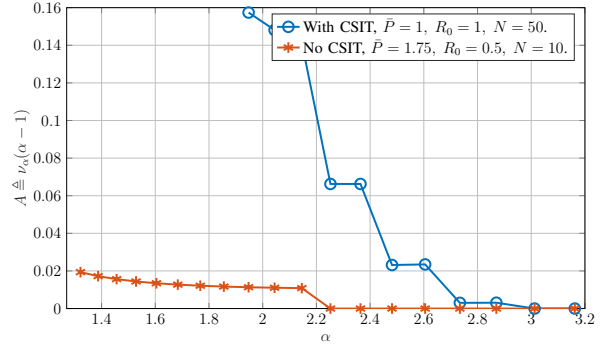 

\begin{figure*}[b!]
	\centering
	\begin{subfigure}{0.48\textwidth}
%
%
\definecolor{mycolor1}{rgb}{0.00000,0.44700,0.74100}%
\definecolor{mycolor2}{rgb}{0.85000,0.32500,0.09800}%
\begin{tikzpicture}[scale=0.5]

\begin{axis}[%
width=5.425in,
height=3.152in,
at={(0.91in,0.548in)},
scale only axis,
xmin=0.1,
xmax=2,
xlabel style={font=\bfseries\color{white!15!black}},
xlabel={$R_0$},
ymin=0.98,
ymax=1.08,
ylabel style={font=\bfseries\color{white!15!black}},
ylabel={$M\triangleq U_x/R_x^*$},
axis background/.style={fill=white},
xmajorgrids,
ymajorgrids,
legend style={legend cell align=left, align=left, draw=white!15!black}
]
\addplot [color=mycolor1, line width=1.5pt, mark=o, mark options={solid, mycolor1}]
  table[row sep=crcr]{%
0.1	1.00000000228532\\
0.2	0.999999998304134\\
0.3	0.999999999551542\\
0.4	0.999999996002368\\
0.5	1.00000000244213\\
0.6	1.00000000191303\\
0.7	1.00000001311281\\
0.8	0.999999997223119\\
0.9	1.0000000243137\\
1	0.999999999257992\\
1.1	0.99999998260576\\
1.2	1.00000000033299\\
1.3	1.0000069958484\\
1.4	1.00030691220201\\
1.5	1.00151782778485\\
1.6	1.00446738434646\\
1.7	1.01022798283791\\
1.8	1.02012233396192\\
1.9	1.03629239170427\\
2	1.06275268784049\\
};
\addlegendentry{With CSIT, $\bar{P}=1$, $\alpha=4$}

\addplot [color=mycolor2, line width=1.5pt, mark=asterisk, mark options={solid, mycolor2}]
  table[row sep=crcr]{%
	0.1	1.00000000343224\\
	0.2	0.999999994914742\\
	0.3	1.00000000440417\\
	0.4	0.999999993575354\\
	0.5	1.00000000137396\\
	0.6	0.999999999364063\\
	0.7	0.999999999898316\\
	0.8	1.0000000045877\\
	0.9	1.00009969218614\\
	1	1.00080745554195\\
	1.1	1.00227199698358\\
	1.2	1.00485461201081\\
	1.3	1.00926835026724\\
	1.4	1.01632307988311\\
	1.5	1.02750318735711\\
	1.6	1.04555676840439\\
};
\addlegendentry{No CSIT, $\bar{P}=10$, $\alpha=1.5$}

\end{axis}
\end{tikzpicture}%
		\caption{Multiplicative bound: $R_{x}^{\rm AI\_SRP}\geq R^*_x/M$, $x\in \{\rm CSIT, NoCSIT\}$.}
		\label{fig:CR11}
	\end{subfigure} 
	\hfill
	\begin{subfigure}{0.48\textwidth}
%
%
\definecolor{mycolor1}{rgb}{0.00000,0.44700,0.74100}%
\definecolor{mycolor2}{rgb}{0.85000,0.32500,0.09800}%
\begin{tikzpicture}[scale=0.5]

\begin{axis}[%
width=5.425in,
height=3.152in,
at={(0.91in,0.548in)},
scale only axis,
xmin=0.1,
xmax=1.5,
xlabel style={font=\bfseries\color{white!15!black}},
xlabel={$R_0$},
xtick={0.2, 0.4, 0.6, 0.8, 1.0, 1.2, 1.4},
ymin=0,
ymax=0.1,
yticklabel style={
	/pgf/number format/fixed,
	/pgf/number format/precision=5
},
ylabel={$A\triangleq\nu_{\alpha}(\alpha-1)$},
axis background/.style={fill=white},
xmajorgrids,
ymajorgrids,
legend style={legend cell align=left, align=left, draw=white!15!black}
]
\addplot [color=mycolor1, line width=1.5pt, mark=o, mark options={solid, mycolor1}, mark size = 4pt]
  table[row sep=crcr]{%
0.1	1.37175604209006e-10\\
0.2	2.23504104113204e-10\\
0.3	2.2169244218162e-10\\
0.4	1.73612679787993e-10\\
0.5	1.42657885504605e-10\\
0.6	1.23431265208751e-10\\
0.7	2.94267499256762e-10\\
0.8	1.27871935262647e-10\\
0.9	4.9656656564423e-10\\
1	4.35282032640316e-10\\
1.1	5.0402815254813e-10\\
1.2	3.82849529856344e-10\\
1.3	0.000665651822245206\\
1.4	0.025088939927852\\
1.5	0.0932424955918403\\
};
\addlegendentry{With CSIT, $\bar{P}=1$, $\alpha=4$}

\addplot [color=mycolor2, line width=1.5pt, mark=asterisk, mark options={solid, mycolor2}, mark size = 4pt]
  table[row sep=crcr]{%
0.1	1.12643228078468e-11\\
0.2	6.36668495701542e-12\\
0.3	2.25408580689646e-11\\
0.4	4.0959236002891e-11\\
0.5	2.02897698642346e-11\\
0.6	1.70279346178859e-11\\
0.7	2.33788544079516e-11\\
0.8	2.99278823945315e-10\\
0.9	0.00212215323585285\\
1	0.00628976391803726\\
1.1	0.0110015335410694\\
1.2	0.0200597191085612\\
1.3	0.0314890101485346\\
1.4	0.0503302847150957\\
1.5	0.0781752493511749\\
};
\addlegendentry{No CSIT, $\bar{P}=10$, $\alpha=1.5$}

\end{axis}
\end{tikzpicture}%
		\caption{Additive bound: $R_x^{\rm AI\_SRP}\geq R^*_x-A$, $x\in \{\rm CSIT, NoCSIT\}$.}
		\label{fig:CR21}
	\end{subfigure}	
	\caption{The variation of multiplicative and additive bounds on the optimal objective values, $R_{\rm CSIT}^{\rm AI\_SRP}$ and $R_{\rm NoCSIT}^{\rm AI\_SRP}$,  of the proposed AI-SRPs under the perfect and no CSIT cases, respectively, with $R_0$.}
	\label{fig:bounds1}
\end{figure*}
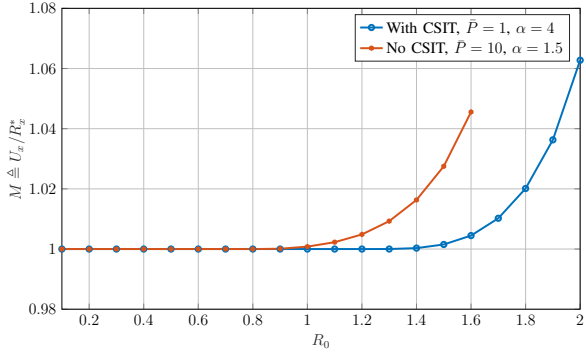
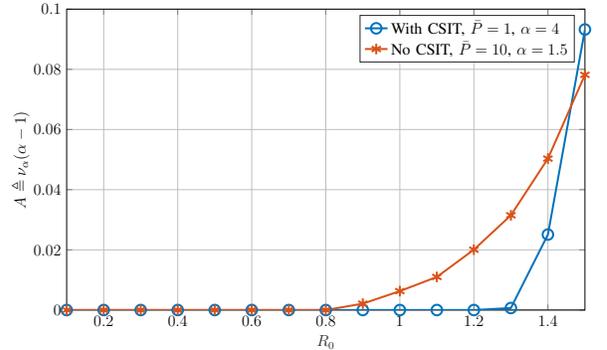 
In Fig.~\ref{fig:bounds} and Fig.~\ref{fig:bounds1}, we plot the variation  of the multiplicative and additive bounds on the performance gaps of the proposed AI-SRPs from the optimal performance under the perfect and no CSIT cases with $\alpha$ and $R_0$, respectively.  
From Theorem~\ref{thm:rp}, recall that $R_x^{\rm AI\_SRP}\geq R^*_x/2$ and $R_x^{\rm AI\_SRP}\geq R^*_x-\nu_{\alpha}(\alpha-1)$ for $x\in \{\rm CSIT, NoCSIT\}$. However, from Fig.~\ref{fig:CR1} and Fig.~\ref{fig:CR11} we note that,  for the selected parameters, the multiplicative bound is much smaller than $2$ both in the perfect and no CSIT cases. 
Further, from Fig.~\ref{fig:CR2} and Fig.~\ref{fig:CR21}, it can be seen that $R_x^*$ is only a fraction of bits away from the corresponding optimal performance for $x\in \{\rm CSIT, NoCSIT\}$. From the figures, it is also interesting to note that as $\alpha$ increases ($R_0$ decreases), the  gap between performance of the  proposed and the optimal policy decreases, and for a sufficiently large  $\alpha$ (small   $R_0$), the gap becomes zero.

 In Fig.~\ref{fig:CR12} and Fig.~\ref{fig:CR22}, we plot the variation of the long-term average throughput with average power, $\bar{P}$ in the proposed AI-SRPs under the perfect and no CSIT  cases, respectively. The long-term average throughput is a concave function of $\bar{P}$. This is because, the AI-SRPs proposed in \eqref{eq:stat_rand_CSIT_convex} and \eqref{eq:BC-Rate-AoI-22} for the CSIT and no CSIT cases, respectively, are convex optimization problems and hence, they are concave functions of $\bar{P}$ (see Exercise 5.32 in \cite{Boyd}). Further, as expected,  the long-term average throughput is non-increasing with the decreasing $\alpha$ and increasing $R_0$.  
This is because, for a smaller $\alpha$ and larger $R_0$, more power needs to be allocated to  lower channel power gain realizations, to satisfy the average AoI constraint, and this results in a lower throughput, due to the concavity of the rate function, $r(\cdot)$.

\section{Conclusions}\label{sec:conclusions}
We considered long-term average throughput maximization problems subject to average AoI and power constraints in a single user fading channel, when  perfect and no CSIT is available.  
We proposed simple \emph{age-independent} stationary randomized policies (AI-SRP),   whose optimal throughputs are at least equal to the half of the respective optimal long-term average throughputs, independent of all parameters of the problem. Moreover, the optimal throughputs are  within additive gaps from the respective optimal long-term average throughputs, where we express the additive gaps in terms of the optimal dual variable corresponding to their average AoI constraints.    We also obtain efficient algorithms to  solve the proposed AI-SRPs.  In summary, we found that \emph{age-independent} policies are \emph{near-optimal} in maximizing long-term average throughputs subject to average AoI and power constraints in fading channels. 

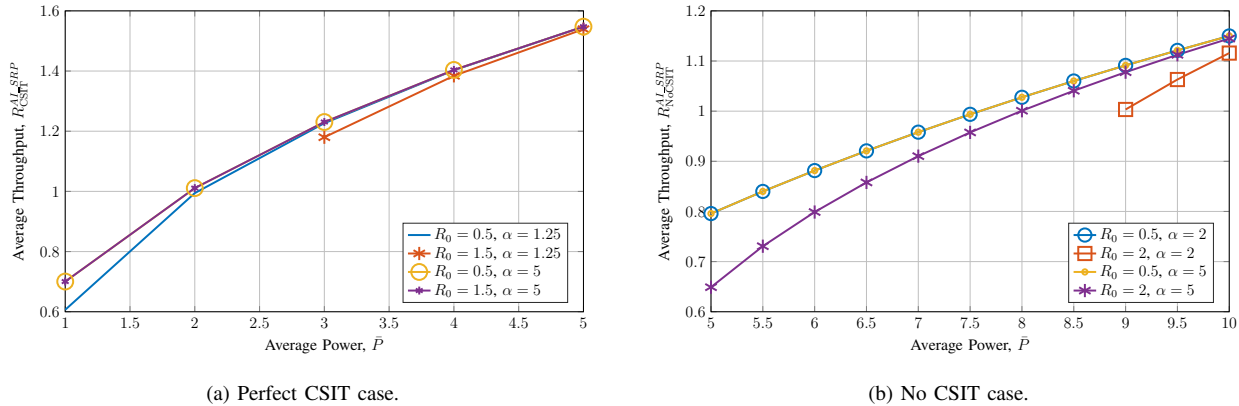
\begin{figure*}[t]
	\centering
	\begin{subfigure}{0.48\textwidth}
%
%
\definecolor{mycolor1}{rgb}{0.00000,0.44700,0.74100}%
\definecolor{mycolor2}{rgb}{0.85000,0.32500,0.09800}%
\definecolor{mycolor3}{rgb}{0.92900,0.69400,0.12500}%
\definecolor{mycolor4}{rgb}{0.49400,0.18400,0.55600}%
\begin{tikzpicture}[scale=0.5]

\begin{axis}[%
width=5.425in,
height=3.152in,
at={(0.91in,0.548in)},
scale only axis,
xmin=1,
xmax=5,
xlabel={Average Power, $\bar{P}$},
ymin=0.6,
ymax=1.6,
ylabel={Average Throughput, $R_{\rm CSIT}^{AI\_SRP}$},
axis background/.style={fill=white},
xmajorgrids,
ymajorgrids,
legend style={at={(0.97,0.03)}, anchor=south east, legend cell align=left, align=left, draw=white!15!black}
]
\addplot [color=mycolor1, line width=1.5pt, mark=, mark options={solid, mycolor1}, mark size = 5pt]
  table[row sep=crcr]{%
1	0.606514846676312\\
2	0.994286179183542\\
3	1.22663636689937\\
4	1.40263960363889\\
5	1.54741345225281\\
};
\addlegendentry{$R_0=0.5$, $\alpha=1.25$}

\addplot [color=mycolor2, line width=1.5pt, mark=asterisk, mark options={solid, mycolor2}, mark size = 5pt]
  table[row sep=crcr]{%
3	1.18000438510233\\
4	1.38413128046366\\
5	1.53918774787225\\
};
\addlegendentry{$R_0=1.5$, $\alpha=1.25$}

\addplot [color=mycolor3, line width=1.5pt, mark=o, mark options={solid, mycolor3}, mark size = 6pt]
  table[row sep=crcr]{%
1	0.70033432828656\\
2	1.01091792773527\\
3	1.23056428587547\\
4	1.40368935246076\\
5	1.54751920849338\\
};
\addlegendentry{$R_0=0.5$, $\alpha=5$}

\addplot [color=mycolor4, line width=1.5pt, mark=asterisk, mark options={solid, mycolor4}, mark size = 3pt]
  table[row sep=crcr]{%
1	0.700334334430035\\
2	1.01091793347428\\
3	1.23056430132743\\
4	1.40368937686571\\
5	1.54751919518438\\
};
\addlegendentry{$R_0=1.5$, $\alpha= 5$}

\end{axis}
\end{tikzpicture}%
		\caption{Perfect CSIT case.}
		
		\label{fig:CR12}
	\end{subfigure} 
	\hfill
	\begin{subfigure}{0.48\textwidth}
%
%
\definecolor{mycolor1}{rgb}{0.00000,0.44700,0.74100}%
\definecolor{mycolor2}{rgb}{0.85000,0.32500,0.09800}%
\definecolor{mycolor3}{rgb}{0.92900,0.69400,0.12500}%
\definecolor{mycolor4}{rgb}{0.49400,0.18400,0.55600}%
\begin{tikzpicture}[scale=0.5]

\begin{axis}[%
width=5.425in,
height=3.152in,
at={(0.91in,0.548in)},
scale only axis,
xmin=5,
xmax=10,
xlabel={Average Power, $\bar{P}$},
ymin=0.6,
ymax=1.2,
ylabel={Average Throughput, $R_{\rm NoCSIT}^{AI\_SRP}$},
axis background/.style={fill=white},
xmajorgrids,
ymajorgrids,
legend style={at={(0.97,0.03)}, anchor=south east, legend cell align=left, align=left, draw=white!15!black}
]
\addplot [color=mycolor1, line width=1.5pt, mark=o, mark options={solid, mycolor1}, mark size = 5pt]
  table[row sep=crcr]{%
5	0.795776762744678\\
5.5	0.839921493591936\\
6	0.881497855124596\\
6.5	0.920797098238189\\
7	0.958126729713365\\
7.5	0.993690240564288\\
8	1.02764721534949\\
8.5	1.0601365601379\\
9	1.09127988579062\\
9.5	1.12118441851522\\
10	1.14997490548248\\
};
\addlegendentry{$R_0=0.5$, $\alpha=2$}

\addplot [color=mycolor2, line width=1.5pt, mark=square, mark options={solid, mycolor2}, mark size = 5pt]
  table[row sep=crcr]{%
9	1.00335243661194\\
9.5	1.06308891597415\\
10	1.11581437160034\\
};
\addlegendentry{$R_0=2$, $\alpha=2$}

\addplot [color=mycolor3, line width=1.5pt, mark=o, mark options={solid, mycolor3}]
  table[row sep=crcr]{%
5	0.795776775749026\\
5.5	0.839921492331396\\
6	0.881497850456546\\
6.5	0.920797105763015\\
7	0.958126742346574\\
7.5	0.993690259942349\\
8	1.02764722301148\\
8.5	1.0601365597514\\
9	1.09127987325529\\
9.5	1.1211844187234\\
10	1.1499749079406\\
};
\addlegendentry{$R_0=0.5$, $\alpha=5$}

\addplot [color=mycolor4, line width=1.5pt, mark=asterisk, mark options={solid, mycolor4}, mark size = 5pt]
  table[row sep=crcr]{%
5	0.648940451805817\\
5.5	0.730672306677311\\
6	0.79882571535458\\
6.5	0.857872527081296\\
7	0.910290891982392\\
7.5	0.957594015495172\\
8	1.00083848552438\\
8.5	1.04067338859277\\
9	1.07772521851322\\
9.5	1.11238614538236\\
10	1.14482683670788\\
};
\addlegendentry{$R_0=2$, $\alpha=5$}

\end{axis}
\end{tikzpicture}%
		\caption{No CSIT case.}
		
		\label{fig:CR22}
	\end{subfigure}	
	\caption{The variation of the long-term average throughput of the proposed AI-SRP in the perfect and no CSIT cases with the maximum average power, $\bar{P}$.}
	\label{fig:bounds2}
\end{figure*}

\setcounter{subsection}{0}
\section*{Appendix}
\subsection{Proof of Proposition \ref{prop:UB}}
Consider a feasible policy, $\pi=(P(1),P(2),\ldots, P(K))$, over a finite horizon of $K$ slots. 
Now, for a given sample path of the channel gain process, $( {h}(1),\ldots, {h}(K))$, the total number of updates is equal to $M(k)\triangleq\sum_{k=1}^{K}u(k)$. For all $l\in \{1,\ldots,M(k)\}$, let the number of slots between $(l-1)^{\text{th}}$ and $l^{\text{th}}$ updates be denoted by $I[l]$.   
Let $S$ be the number of slots since the last update in $K$ slots. Then, we have, 
\begin{align}\label{eq:K}
K=\sum_{l=1}^{M(k)}I[l]+S. 
\end{align} 
In the slot following the $(l-1)^{\text{th}}$ update, the AoI is reset to $1$. Then, the AoI increases as $\{1,2,\ldots, I[l]\}$ until the $l^{\text{th}}$ update. Hence, the total AoI from the $(l-1)^{\text{th}}$ update to  the $l^{\text{th}}$ update is $I(l)(I(l)+1)/2$. 
Hence, the time average of AoI over $K$ slots can be expressed as
\begin{align}
\frac{1}{K}\sum_{k=1}^{K}a(k)&=\frac{1}{K}\left(\sum_{l=1}^{M(k)}\frac{I(l)(I(l)+1)}{2}+\frac{S(S+1)}{2}\right),\nonumber\\
&\stackrel{(a)}{=}\frac{1}{2}\left(\frac{M(k)}{K} \frac{1}{M(k)}\sum_{l=1}^{M(k)}I^2[l]+\frac{S^2}{K}+1\right),\nonumber\\ 
&\stackrel{(b)}{\geq}\frac{1}{2}\left(\frac{M(k)}{K} \left(\frac{1}{M(k)}\sum_{l=1}^{M(k)}I[l]\right)^2+\frac{S^2}{K}+1\right), \nonumber\\
&\stackrel{(c)}{=}\frac{1}{2}\left(\frac{1}{K} \frac{\left(K-S\right)^2}{M(k)}+\frac{S^2}{K}+1\right),\label{eq:KStoK}\\
&\stackrel{(d)}{\geq}\frac{1}{2}\left(\frac{K}{M(k)+1}+1\right),\label{eq:sp} 
\end{align}
where we substitute the linear terms in the above equation by \eqref{eq:K} for (a), (b) is because of the Cauchy–Schwartz inequality,  $\left(\frac{1}{n}\sum_{i=1}^{n}a\right)^2\leq \frac{1}{n}\sum_{i=1}^{n}a^2$ for $a_1,\ldots,a_n\geq 0$, (c) is obtained by replacing  $\sum_{l=1}^{M(k)}I[l]$ by $K-S$ from \eqref{eq:K}, and (d) is obtained by analytically minimizing \eqref{eq:KStoK} with respect to $S$. 

Taking expectation of \eqref{eq:sp}, we have,
\begin{align}
&\frac{1}{K}\mathbb{E}_{ H_1,\ldots,H_K}\left[\sum_{k=1}^{K}a(k)\right],\nonumber\\
&\geq \frac{1}{2}\mathbb{E}_{ H_1,\ldots,H_K}\left[\frac{K}{M(k)+1}+1\right],\nonumber\\
&\stackrel{(a)}{\geq }\frac{1}{2}\frac{K}{\mathbb{E}_{ H_1,\ldots,H_K}\left[M(k)\right]+1}+1,\nonumber\\
&\stackrel{}{=}\frac{1}{2}\frac{1}{\frac{1}{K}\mathbb{E}_{ H_1,\ldots,H_K}\left[M(k)\right]+\frac{1}{K}}+1,\nonumber\\
&\stackrel{(b)}{=}\frac{1}{2}\frac{1}{\frac{1}{K}\sum_{k=1}^{K}\mathbb{E}_H\left[u(k)\right]+\frac{1}{K}}+1\label{eq:aineq},
\end{align}
where (a) is due to Jensen's inequality and (b) is obtained by using the definition of $M(k)$. 

Now, applying the limit $K\rightarrow \infty$ to \eqref{eq:aineq}, we have, 
\begin{align}
&\lim_{K\rightarrow \infty} \frac{1}{K}\mathbb{E}_{ H_1,\ldots,H_K}\left[\sum_{k=1}^{K}a(k)\right],\nonumber\\
&\geq \lim_{K\rightarrow \infty} \frac{1}{2}\frac{1}{\frac{1}{K}\sum_{k=1}^{K}\mathbb{E}_{H}\left[u(k)\right]+\frac{1}{K}}+1\nonumber\\
&= \frac{1}{2}\frac{1}{\lim_{K\rightarrow \infty}\frac{1}{K}\sum_{k=1}^{K}\mathbb{E}_{ H}\left[u(k)\right]}+1. 
\end{align}
Now, applying the constraint on the long-term average age, and re-arranging the terms, we get the constraint, \eqref{eq:NofUpdates}. Since we are constraining a lower bound on the long-term average age, the optimal solution with \eqref{eq:NofUpdates} is an upper-bound to the original problem in \eqref{eq:main-opt-problem}.

\subsection{Proof of Theorem~\ref{thm:UB-CSIT}}
The upper-bound problem in  \eqref{eq:upper-bound-problem} has the structure of a constrained MDP \cite{VSharma,CMDP}, with state space $\mathcal{H}$ and action space  $\mathbb{R}^+\cup \{0\}$. Since the channel power gains are i.i.d. across frames, we have, $f_h(h''|h',P)=f_h(h'')$ for $h'',h'\in \mathcal{H}$, i.e., the next state does not depend on the current action or the state.  
When action $P(k)$ is taken, we define two  instantaneous rewards, namely $R_s(k)$ and $u(k)$, and an instantaneous cost, $P(k)$. 
Now, the optimization problem in  \eqref{eq:upper-bound-problem} can be transformed into the following unconstrained problem by the Lagrangian method:
\begin{subequations} \label{eq:upper-bound-problem1}
	\begin{align}
	\underset{P(1), P(2), \ldots}{\text{min}} &\;\lim_{K\rightarrow \infty}-\frac{1}{K}\sum_{k=1}^{K}\mathbb{E}[R_s(k)]-\nu \lim_{K\rightarrow \infty}\frac{1}{K}\sum_{k=1}^{K}\mathbb{E}[u(k)] +\lambda
	\lim_{K\rightarrow \infty}\frac{1}{K}\sum_{k=1}^{K}\mathbb{E}[P(k)], \;\;&\\
	\text{subject to}&\;\; P(k)\geq 0,\;\forall k\in\{1,2,\ldots\},\;\;
	\end{align}
\end{subequations}
where $\lambda, \nu\geq 0$ are Lagrange multipliers.
We now have the below result on the structure of the optimal solution to \eqref{eq:upper-bound-problem1}.

\begin{lemma}
	An SRP is optimal for solving \eqref{eq:upper-bound-problem1}.
\end{lemma}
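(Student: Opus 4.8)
The plan is to exploit the degenerate transition structure emphasized just before the lemma: the states $\{H(k)\}$ are i.i.d.\ and $f_h(h''\mid h',P)=f_h(h'')$, so the next state is independent of both the current state and the current action. Hence \eqref{eq:upper-bound-problem1} carries no genuine temporal coupling—the past can affect the objective only through the action chosen in the present slot, and the per-slot contribution depends only on the current pair $(H(k),P(k))$. Writing the per-slot cost as
\begin{align*}
c(h,P)\triangleq -r(hP)-\nu\,\mathbbm{1}[r(hP)>R_0]+\lambda P,
\end{align*}
the objective of \eqref{eq:upper-bound-problem1} for any (history-dependent, randomized) policy $\pi\in\Pi$ is $J(\pi)=\lim_{K\to\infty}\tfrac1K\sum_{k=1}^K\mathbb{E}_\pi[c(H(k),P(k))]$.

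First I would establish a state-wise lower bound. Since $H(k)\sim f_h$ independently of the entire history, conditioning on $H(k)=h$ gives, for every slot $k$,
\begin{align*}
\mathbb{E}_\pi[c(H(k),P(k))]=\int_{h\in\mathcal{H}}\mathbb{E}_\pi\!\left[c(h,P(k))\mid H(k)=h\right]f_h\,dh\;\geq\;\int_{h\in\mathcal{H}}\Big(\inf_{P\geq 0}c(h,P)\Big)f_h\,dh,
\end{align*}
because the inner conditional expectation is an average of $c(h,\cdot)$ over the policy-induced action distribution and is therefore no smaller than the pointwise infimum, whatever the correlation between $P(k)$ and the history. Averaging over $k$ and passing to the limit shows $J(\pi)\geq J^\star\triangleq\int_{h\in\mathcal{H}}\inf_{P\geq 0}c(h,P)\,f_h\,dh$, a bound independent of $\pi$.

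Next I would exhibit a stationary policy attaining $J^\star$. Defining the age-independent stationary map $\psi(h)\in\argmin_{P\geq 0}c(h,P)$ and applying it in every slot yields $J(\psi)=\int_{h\in\mathcal{H}}c(h,\psi(h))f_h\,dh=J^\star$, so $\psi$ is optimal; being stationary it is an SRP, with a deterministic stationary rule as a degenerate special case. This matches the lemma.

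The main obstacle is the discontinuity of $c(h,\cdot)$ created by the indicator $u$ in \eqref{eq:u}: because success requires $r(hP)>R_0$ strictly, the ``successful'' branch lives on the open set $\{P:r(hP)>R_0\}$, on which $\inf_P c(h,P)$ need not be attained, so $\psi(h)$ may fail to exist. I would resolve this exactly as the AI-SRP formulation does, by replacing the strict requirement with $r(hP)\geq R_0+\epsilon$ for an arbitrarily small $\epsilon>0$—the same $\epsilon$ appearing in \eqref{eq:stat_rand_CSIT_convex}—which closes the branch feasible set and makes the per-state infimum attained; the resulting $\psi(h)$ then takes a value above the threshold when the $\nu$-reward for a successful update outweighs its power cost and a value at or below it otherwise. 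The residual freedom in breaking ties between the two branches is where randomization can enter: although a deterministic selection already attains $J^\star$ here, retaining the randomized (SRP) form is precisely what the remainder of the proof of Theorem~\ref{thm:UB-CSIT} leverages to identify \eqref{eq:upper-bound-problem1} with the constrained form of \eqref{eq:stat_rand_CSIT_convex} evaluated at the relaxed age budget.
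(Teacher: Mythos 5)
Your proof is correct, but it takes a genuinely different route from the paper's. The paper treats \eqref{eq:upper-bound-problem1} as an infinite-horizon average-cost MDP and invokes a general existence theorem: it verifies conditions W1--W3 of \cite{SRP-NEW} (local compactness of the state space, inf-compactness of the one-step cost, weak continuity of the transition kernel) and concludes from that reference that an optimal SRP exists. You instead exploit the degenerate transition structure directly: since $H(k)$ is i.i.d.\ and unaffected by the actions, every history-dependent policy's per-slot expected cost is lower bounded by $\int_{h\in\mathcal{H}}\inf_{P\geq 0}c(h,P)f_h\,dh$, and a stationary selector $\psi(h)\in\argmin_{P\geq 0}c(h,P)$ attains this bound; this is more elementary, fully self-contained, and proves something slightly stronger, namely optimality of a \emph{deterministic} stationary age-independent policy, which is a degenerate special case of an SRP. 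Your argument also repairs a point the paper glosses over: the indicator in \eqref{eq:u} uses a strict inequality, so the one-step cost is not lower semicontinuous in $P$ and its sublevel sets need not be closed, which makes the paper's verification of W2 (inf-compactness) not airtight; you explicitly flag the resulting non-attainment of the per-state infimum and fix it with the same $\epsilon$-threshold device the paper itself employs in \eqref{eq:stat_rand_CSIT_convex} and in Appendix B, so your resolution is consistent with the rest of the paper. Two minor caveats, neither affecting the substance and both shared with the paper: coercivity of $c(h,\cdot)$ (needed for a finite, attained infimum after the $\epsilon$-fix) requires $\lambda>0$ or a bounded action set, and if the Ces\`aro limit defining $J(\pi)$ fails to exist for some policy, your lower bound should formally be stated for the limit inferior.
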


\begin{proof}
	For any given $\lambda$ and $\nu$, \eqref{eq:upper-bound-problem1}   is  an infinite horizon expected average reward MDP. 
	In this case,  it will have an SRP as the optimal policy, if the conditions W1-W3 in  \cite{SRP-NEW} hold for this problem.  
	We now present W1-W3 (in italics) and argue that \eqref{eq:upper-bound-problem1}    satisfies these conditions in the following. 
	\begin{itemize}
		\item[W1.] \emph{$\mathcal{H}$ is a locally compact space with a countable base.} \\
		This condition holds  because $\mathcal{H}$ is a subset of $\mathbb{R}$, which is locally compact  with a countable base. 	 
		
		\item[W2.] \emph{The one-step cost function, $-R_s(k)-\nu u(k)+\lambda P(k)$ must  be inf-compact, i.e., for all $\beta\in \mathbb{R}$, the set $D(\beta) = \{(h(k), P(k)) \mid -R_s(k)-\nu u(k)+\lambda P(k) \leq \beta\}$ must be compact.} \\
		This condition holds because, for any $\beta\in \mathbb{R}$, $\nu,\lambda\geq 0$, $R_0\geq 0$ and $h(k)\in \mathcal{H}$,  we can find $P_0<\infty$ such that 
		$-R_s(k)-\nu u(k)+\lambda P(k) \leq \beta$ for all $P(k)\in [0,P_0]$, which is a compact set. Note that for some negative values of $\beta$ and  $\nu,\lambda\geq 0$, $R_0\geq 0$ and $h(k)\in \mathcal{H}$,  there may not exist a $P(k)\geq 0$ such that 	$-R_s(k)-\nu u(k)+\lambda P(k) \leq \beta$. In this case, the set $D(\beta)$ will be the null set, which is compact. 
		\item[W3.] \emph{Let $r(h'|h,P)$ denote the probability density of a transition from a state $h$ to the state $h'$.  Then, $r(h'|h,P)$ must be weakly continuous. That is, for any bounded and continuous function, $l: \mathcal{H}\rightarrow \mathbb{R}$,
			\begin{align}\label{eq:weakCont}
			\int_{\mathcal{H}} l(h') r(h'|h,P)dh',
			\end{align}
			must be continuous in $(h,P)$.}\\
		This condition holds because of the following. Since $f_h(h'|h,P)=f_h(h')$, \eqref{eq:weakCont} evaluates to a constant value irrespective of the value of $h$ and $P$ and  a constant function  is continuous.   
	\end{itemize} 
	
	Since the conditions W1-W3  hold for \eqref{eq:upper-bound-problem1},   it must have an SRP as the optimal policy \cite{SRP-NEW}. 
\end{proof}

Now, for a given $\lambda, \nu\geq 0$, the following is the \emph{most general randomized power policy}  one can adopt: When the channel realization is $h$,  transmit with random power $P^{\rm rand}_h\geq 0$ with PDF, $g(P^{\rm rand}_h)\geq 0$, where $\int_{0}^{\infty}g(P^{\rm rand}_h) dP^{\rm rand}_h=1$.  The above policy is the most general SRP in the sense that any SRP is a special case of the above policy,  as the PDF, $g(P^{\rm rand}_h)$, is arbitrary. 
We now have the following important lemma.

\begin{lemma}
	For a given $\lambda, \nu\geq 0$,  when the channel power gain is $h$, for the most general random power policy, $P^{\rm rand}_h$, there exists a power policy of cardinality two  with the following structure:  Transmit with power
	\begin{align}\label{eq:UB-policy}
	P_h=  
	\begin{cases}
	P_h^1& \text{with probability}\; \mu_h,\\
	P_h^2           & \text{otherwise},   
	\end{cases}
	\end{align}
	where, 
	$	\mu_h  \triangleq \mathbb{P}\left(P^{\rm rand}_h>\frac{r^{-1}(R_0)}{h}\right)$, 
	$P_h^1> \frac{r^{-1}(R_0)}{h}$ and $P_h^2\leq \frac{r^{-1}(R_0)}{h}$,  
	such that 
	\begin{align}
	&\mathbb{E}_{f_h}\left[\mathbb{E}_g[P^{\rm rand}_h]\right]= \mathbb{E}_{f_h}\left[\mu_h {P}_h^1+(1-\mu_h ){P}_h^2\right],\label{eq: same-power}\\
	&\mathbb{E}_{f_h}\left[\mathbb{E}_g\left[r(hP^{\rm rand}_h)\right]\right]\leq \mathbb{E}_{f_h}\left[\mu_hr\left(h{P}^1_h\right) +(1-\mu_h )r\left(h{P}^2_h\right)\right].\label{eq:better rate} 
	\end{align}
	
\end{lemma}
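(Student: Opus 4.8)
The plan is to collapse the arbitrary density $g(\cdot)$ into a two-point distribution by conditioning on the single event that decides whether an update succeeds, namely whether the realized power exceeds the threshold $r^{-1}(R_0)/h$ at which $r(hP)=R_0$. The reason we must retain \emph{two} points rather than one is that $\{P^{\rm rand}_h>r^{-1}(R_0)/h\}$ is exactly the success event $\{u=1\}$, so preserving its probability $\mu_h$ preserves the reward $\mathbb{E}[u(k)]$ that drives the constraint \eqref{eq:NofUpdates}; a single-point (mean-power) collapse would in general perturb $\mu_h$ and is therefore not admissible here.

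First I would partition the support of $P^{\rm rand}_h$ into the \emph{success region} $\mathcal{S}_h\triangleq\{P:P>r^{-1}(R_0)/h\}$ and its complement $\mathcal{S}_h^c$, and define the two candidate powers as the conditional means
\begin{align*}
P_h^1\triangleq \mathbb{E}_g\!\left[P^{\rm rand}_h\mid P^{\rm rand}_h>\tfrac{r^{-1}(R_0)}{h}\right],\qquad P_h^2\triangleq \mathbb{E}_g\!\left[P^{\rm rand}_h\mid P^{\rm rand}_h\leq\tfrac{r^{-1}(R_0)}{h}\right].
\end{align*}
Since every realization in $\mathcal{S}_h$ lies strictly above the threshold, the quantity $P^{\rm rand}_h-r^{-1}(R_0)/h$ is almost surely positive there, so its conditional expectation is strictly positive and hence $P_h^1>r^{-1}(R_0)/h$; symmetrically $P_h^2\leq r^{-1}(R_0)/h$. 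These are precisely the structural requirements in the statement, and by construction $\mu_h=\mathbb{P}(P^{\rm rand}_h>r^{-1}(R_0)/h)$ is left unchanged.

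Next I would establish \eqref{eq: same-power} directly from the tower property: conditioning on $\mathcal{S}_h$ and $\mathcal{S}_h^c$ gives, for each $h$, the exact identity $\mathbb{E}_g[P^{\rm rand}_h]=\mu_h P_h^1+(1-\mu_h)P_h^2$, and averaging over $f_h$ yields \eqref{eq: same-power} with equality. For the rate inequality \eqref{eq:better rate} I would invoke the concavity of $r(\cdot)$: because $P\mapsto r(hP)$ is concave for each fixed $h>0$ (a concave function composed with an affine map), Jensen's inequality applied \emph{within each region} gives $\mathbb{E}_g[r(hP^{\rm rand}_h)\mid \mathcal{S}_h]\leq r(hP_h^1)$ and $\mathbb{E}_g[r(hP^{\rm rand}_h)\mid \mathcal{S}_h^c]\leq r(hP_h^2)$. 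Combining these through the same total-expectation decomposition and then averaging over $f_h$ delivers \eqref{eq:better rate}.

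The only routine care needed is in the degenerate cases $\mu_h\in\{0,1\}$, where one conditional mean is undefined; there the corresponding point carries zero weight and may be assigned any value consistent with its threshold constraint (e.g.\ $P_h^2=0$, or $P_h^1$ just above $r^{-1}(R_0)/h$), so both displays persist. I do not anticipate a genuine obstacle: the single conceptual point is the deliberate choice to condition on the success event, which keeps $\mu_h$ exactly fixed while Jensen supplies the rate gain; everything else is the tower property applied region by region.
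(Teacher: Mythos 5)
Your proposal is correct and follows essentially the same route as the paper's proof: conditioning on the success event $\{P^{\rm rand}_h>r^{-1}(R_0)/h\}$, defining $P_h^1,P_h^2$ as the conditional means, obtaining \eqref{eq: same-power} by the tower property, and obtaining \eqref{eq:better rate} by Jensen's inequality applied within each region via the concavity of $r(\cdot)$. Your additional checks (strict threshold inequality for $P_h^1$ and the degenerate cases $\mu_h\in\{0,1\}$) are details the paper leaves implicit, but they do not change the argument.
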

\begin{proof}
	The result can be proved along the lines in the proof of Proposition 3.1 in \cite{Goldsmith}. Essentially, the result follows because of the concavity of $r(\cdot)$. 	
	Consider the fully randomized policy, $P^{\rm rand}_h$ and define the following deterministic quantities for a given $h\in \mathcal{H}$:  
	\begin{align*}
	&\tilde{P}_h^1 \triangleq \mathbb{E}_g\left[P^{\rm rand}_h\mid P^{\rm rand}_h>\frac{r^{-1}(R_0)}{h}\right],\\
	&\tilde{P}_h^2 \triangleq \mathbb{E}_g\left[P^{\rm rand}_h\mid P^{\rm rand}_h\leq \frac{r^{-1}(R_0)}{h}\right]. 
	\end{align*}	
	Now, we note the following:  
	\begin{align*}
	\mathbb{E}_g[P^{\rm rand}_h]= & \mathbb{E}_g\left[P^{\rm rand}_h\mid P^{\rm rand}_h>\frac{r^{-1}(R_0)}{h}\right] \times\mathbb{P}\left(P^{\rm rand}_h>\frac{r^{-1}(R_0)}{h}\right)+\nonumber\\
	& \mathbb{E}_g\left[P^{\rm rand}_h\mid P^{\rm rand}_h\leq \frac{r^{-1}(R_0)}{h}\right]\times\mathbb{P}\left(P^{\rm rand}_h\leq \frac{r^{-1}(R_0)}{h}\right), \\
	= & \tilde{P}_h^1\mu_h+\tilde{P}_h^2(1-\mu_h),\\
	&\hspace{-2cm}\implies \mathbb{E}_{f_h}\left[\mathbb{E}_g[P^{\rm rand}_h]\right]= \mathbb{E}_{f_h}\left[\tilde{P}_h^1\mu_h\right]+\mathbb{E}_{f_h}\left[\tilde{P}_h^2(1-\mu_h)\right]. 
	\end{align*}
	This proves \eqref{eq: same-power}. 
	To prove \eqref{eq:better rate}, note that
	\begin{align}\label{eq:P1}
	\mathbb{E}_g\left[\log(1+hP^{\rm rand}_h)\mid P^{\rm rand}_h>\frac{r^{-1}(R_0)}{h}\right]&\stackrel{(a)}{\leq }r\left(h\mathbb{E}_g\left[P^{\rm rand}_h\mid P^{\rm rand}_h>\frac{r^{-1}(R_0)}{h}\right]\right)\nonumber\\
	&\stackrel{(b)}{=}r\left(h\tilde{P}^1_h\right),
	\end{align}
	where (a) is due to the Jensen's inequality and (b) follows from the definition of $\tilde{P}^1_h$. 
	Similarly, we have
	\begin{align}\label{eq:P2}
	&\mathbb{E}_g\left[\log(1+hP^{\rm rand}_h)\mid P^{\rm rand}_h\leq \frac{r^{-1}(R_0)}{h}\right]\leq r\left(h\tilde{P}^2_h\right).
	\end{align}
	From \eqref{eq:P1} and \eqref{eq:P2}, we get \eqref{eq:better rate}. 

\end{proof}

The above lemma says that in each fading state, for every fully general SRP, $P^{\rm rand}_h$ with some average power consumption,  there exists another policy having the structure in   \eqref{eq:UB-policy} that achieves the same or a higher throughput, with the same average power consumption as the fully general policy.  This implies that the optimal SRP that solves \eqref{eq:upper-bound-problem1}  has the structure of  \eqref{eq:UB-policy}. In the following, we apply \eqref{eq:UB-policy} and re-write \eqref{eq:upper-bound-problem1}. 

With \eqref{eq:UB-policy},   $\mathbb{E}[R_s(k)]$, $\mathbb{E}[u(k)]$ and $\mathbb{E}[P(k)]$ are given by \eqref{eq:objSRP}, and left-hand-sides of \eqref{eq:AOI} and \eqref{eq:Pc}, respectively. Defining $v^1_h\triangleq \mu_hP^1_h$ and $v^2_h\triangleq (1-\mu_h)P^2_h$, 
\eqref{eq:upper-bound-problem1}  can be re-written as 
\begin{subequations}\label{eq:stat_rand_CSIT_convex1}  
	\begin{align}
	\underset{\mu_h, v^1_h, v^2_h\geq 0}{\text{max}} &\;\;\int_{h\in \mathcal{H}} \left(\mu_hr\left(\frac{hv_h^1}{\mu_h}\right)+(1-\mu_h)r\left(\frac{hv_h^2}{1-\mu_h}\right)\right)f_hdh+\nu \int_{h\in \mathcal{H}}\mu_h f_hdh\nonumber\\
	&\;-\lambda\int_{h\in \mathcal{H}}   \left(v_h^1+v^2_h\right)f_hdh,\\
	\text{subject to}&\;   v_1^h\geq \mu_h\left(\frac{r^{-1}(R_0+\epsilon)}{h}\right),~ \mu_h	\leq 1, \; v_2^h\leq  (1-\mu_h) \left(\frac{r^{-1}(R_0)}{h}\right).
	\end{align}
\end{subequations} 
The above optimization problem is jointly convex and we can optimally solve it for various values of $\lambda, \nu\geq 0$. 
Now,     solving  \eqref{eq:stat_rand_CSIT_convex1} for a choice of $(\lambda,\nu)$ pair gives  the optimal solution to the original constrained optimization problem, if it results in  $\int_{h\in \mathcal{H}}\mu_h f_hdh\geq 1/(2\alpha-1)$ and $\int_{h\in \mathcal{H}}   \left(v_h^1+v^2_h\right)f_hdh\leq \bar{P}$, and the optimal objective function does not attain a higher value for any other choice of $(\lambda,\nu)$. 
In order to obtain the optimal $(\lambda,\nu)$, consider the following optimization problem. 
\begin{subequations}\label{eq:stat_rand_CSIT_convex2}  
	\begin{align}
	\underset{\mu_h, v^1_h, v^2_h\geq 0}{\text{max}} &\;\;\int_{h\in \mathcal{H}} \left(\mu_hr\left(\frac{hv_h^1}{\mu_h}\right)+(1-\mu_h)r\left(\frac{hv_h^2}{1-\mu_h}\right)\right)f_hdh,  \;\;\label{eq:obj2}\\
	\text{subject to}&\; \int_{h\in \mathcal{H}}\mu_h f_hdh\geq \frac{1}{2\alpha-1},\, \mu_h	\leq 1,\label{eq:outage2} \\
	&\;  \int_{h\in \mathcal{H}}   \left(v_h^1+v^2_h\right)f_hdh \leq \bar{P},\label{eq:avp2} \\
	&\; v_1^h\geq \mu_h\left(\frac{r^{-1}(R_0+\epsilon)}{h}\right),\; v_2^h\leq  (1-\mu_h) \left(\frac{r^{-1}(R_0)}{h}\right),
	\end{align}
\end{subequations} 
By inspection, we can see that \eqref{eq:stat_rand_CSIT_convex1} is a Lagrangian formulation of \eqref{eq:stat_rand_CSIT_convex2}, with $\lambda$ and $\nu$ being the Lagrange multipliers corresponding to \eqref{eq:avp2} and \eqref{eq:outage2}, respectively. Hence, the optimal $\lambda$ and $\nu$ for which \eqref{eq:stat_rand_CSIT_convex1} attains its maximum value are respectively the optimal Lagrange multipliers corresponding to \eqref{eq:avp2} and \eqref{eq:outage2}. 

We finally note that \eqref{eq:stat_rand_CSIT_convex2} is identical to \eqref{eq:stat_rand_CSIT_convex} except that the $\alpha$ in \eqref{eq:stat_rand_CSIT_convex} has been replaced by $2\alpha-1$ in \eqref{eq:stat_rand_CSIT_convex2}. Hence, we conclude that, $U_{\rm CSIT}=R_{\rm CSIT}^{\rm AI\_SRP}(2\alpha-1)$.

\subsection{ The Optimal Solution to \eqref{eq:LSC} with $r(x)=\log(1+x)$}
 When $r(x)=\log(1+x)$, an efficient algorithm, referred to as the layered water-filling (LWF) algorithm, has been obtained in \cite{NoCSIT} as the optimal solution to \eqref{eq:LSC}.  We present it here, as the algorithm and its structural properties are useful in obtaining an iterative algorithm for optimally solving the proposed AI-SRP,  when  $r(x)=\log(1+x)$.

When $r(x)=\log(1+x)$, \eqref{eq:uncon-avgP} becomes 
\begin{align*}
\left(\sum_{i=1}^{N}(h_i^{-1}-h_{i+1}^{-1}) \exp\left(\sum_{k=1}^{i}R_k\right)\right)-h_1^{-1}\leq P. 
\end{align*}
Since \eqref{eq:LSC} is convex, the KKT conditions are necessary and sufficient for optimality. Let $\lambda, \beta_i\geq 0$ be the Lagrange multipliers corresponding to \eqref{eq:uncon-avgP} and the inequality $R_i\geq 0$, respectively. Then, based on the KKT conditions, we can find that the optimal solution to \eqref{eq:LSC} admits the following: 
\begin{align}\label{eq:unconstrained1}
\lambda \exp\left(\sum_{m=1}^{i}R_m\right)=\frac{p_i}{h_i^{-1}-h^{-1}_{i+1}} + \frac{\beta_i-\beta_{i+1} }{h_i^{-1}-h_{i+1}^{-1}}, 
\end{align}
where $h_{i+1}\triangleq\infty$.  
Further, the complementary slackness condition requires $\beta_{i}R_{i}=0$. Hence, whenever $R_{i}>0$, we must have,  $\beta_{i}=0$.  
Moreover, based on the KKT conditions, an algorithm, referred to as the layered water-filling (LWF) algorithm, presented in Algorithm~\ref{algo:lwf} has been shown to be optimal in \cite{NoCSIT}. The LWF algorithm has the following two important characteristics:

\begin{itemize}
	\item Not every layer may be allocated a non-zero amount of power. 
	Let $\mathcal{A}=\{a_1, \ldots, a_L\}$, $L\leq N$, be the set of \emph{active} layers for which a non-zero power is allocated. Further, let $a_1, \ldots, a_L$ be such that  $h_{a_1}<h_{a_2}<\ldots h_{a_L}$.  
	The set $\mathcal{A}$ can be found as follows \cite{Diggavi,NoCSIT}:

	Assume that all the layers are active, i.e., let $\mathcal{A}=\{1,\ldots,N\}$. 
	If ${p_i}/\left({h_i^{-1}-h^{-1}_{i+1}}\right) \leq {p_{i-1}}/\left({h^{-1}_{i-1}-h^{-1}_{i}}\right)$ for any  $i\in \mathcal{A}$, then we must have $R_{i} \leq 0$ in order to satisfy \eqref{eq:unconstrained1}. Since $R_{i}$ cannot be negative, we must have, $R_{i}=0$. Noting this, we remove layer $i$ from $\mathcal{A}$ and assign $p_{i-1}=p_{i-1}+p_i$ as the probability mass of $h_{i-1}$. 
	We continue to merge the layers until ${p_{a_i}}/\left({h_{a_i}^{-1}-h^{-1}_{a_{i+1}}}\right)$ is strictly increasing with $i=1\ldots,L$, where we recall that $a_i$ is the $i^{\rm th}$ element in  $\mathcal{A}$ such that $h_{a_1}<\ldots<h_{a_{L}}$ and $L=|\mathcal{A}|$.  We summarize the above procedure for  computing $\mathcal{A}$ in steps 2-11 of Algorithm~\ref{algo:lwf}. Note that $\mathcal{A}$ depends only on the channel power gain distribution. 
	
		\begin{table*}[t]
		\begin{equation}\label{eq:pmax}
		P^{\mathrm{max}}_{a_l} =\left(\frac{p_{a_l}(h^{-1}_{a_{l-1}}-h^{-1}_{a_l})}{p_{a_{l-1}}(h^{-1}_{a_l}-h^{-1}_{a_{l+1}})}-1\right)\left(h^{-1}_{a_l}-h^{-1}_{a_{l+1}}+ \sum_{j=l+1}^{A}\left((h^{-1}_{a_j}-h^{-1}_{a_{j+1}})\exp\left(\sum_{i=l+1}^{j}\log\left(\frac{p_{a_i}(h^{-1}_{a_{i-1}}-h^{-1}_{a_i})}{p_{a_{i-1}}(h^{-1}_{a_i}-h^{-1}_{a_{i+1}})}\right)   \right)\right)	\right)
		\end{equation}	
	\end{table*}

	\item 	Among the active layers, power is allocated first to layer $a_L$, followed by the consecutive lower layers. The optimal power allocated to layer $a_l, \; l=1,\ldots,L$, is given by
	\begin{equation} \label{eq:optLSC}
	P^*_{a_l}=\left\{\,
	\begin{IEEEeqnarraybox}[][c]{l?s}
	\IEEEstrut
	P^{\mathrm{max}}_{a_l}   & if $P^{\mathrm{max}}_{a_l} \leq P -\sum_{j=l+1}^{L}P^{\mathrm{max}}_{a_j}$,\\
	P -\sum_{j=l+1}^{L}P^{\mathrm{max}}_{a_j}   & otherwise.
	\IEEEstrut
	\end{IEEEeqnarraybox}
	\right.
	\end{equation}
	where $P^{\mathrm{max}}_{a_1}=\infty$ and $P^{\mathrm{max}}_{a_l}$  for $l=2,\ldots,L$ is given by \eqref{eq:pmax}. Now, the corresponding rates can be computed from   \eqref{eq:rate_lsc}. 	We summarize the above procedure for  computing optimal transmit powers and rates in steps 12-15 of Algorithm~\ref{algo:lwf}. 
\end{itemize}

\begin{algorithm}[t]
	\caption{The layered water-filling (LWF) algorithm}
	\label{algo:lwf}
	{\small 
		\begin{algorithmic}[1]
			\Procedure{LWF}{$(h_1,\ldots,h_N), (p_1,\ldots,p_N), P$}
			\State $\mathcal{A}\coloneqq \{1,\ldots,N\}$,  $R_i,P_i\coloneqq 0$\, $\forall$\, $i\in \{1,\ldots,N\}$.  
			\State $a_i \coloneqq i^{\rm th}$ \text{element in} $\mathcal{A}$ such that $h_{a_1}<\ldots<h_{a_{|\mathcal{A}|}}$. 
			\State $L \coloneqq |\mathcal{A}|$. 
			\For {$i= 2$ to $L$}
			
			\If {$\frac{p_{a_i}}{h_{a_i}^{-1}-h^{-1}_{a_{i+1}}} \leq \frac{p_{a_{i-1}}}{h^{-1}_{a_{i-1}}-h^{-1}_{a_i}}$}
			\State  Update mass of $h_{a_{i-1}}$ as $p_{a_{i-1}}\coloneqq p_{a_{i-1}}+p_{a_i}$.
			\State Remove $a_i$ from $\mathcal{A}$, i.e., $\mathcal{A}\coloneqq \mathcal{A} \setminus \{a_i\}$. 
			\EndIf  
			
			\EndFor
			\State Repeat 3-10 until $\frac{p_{a_i}}{h_{a_i}^{-1}-h^{-1}_{a_{i+1}}} > \frac{p_{a_{i-1}}}{h^{-1}_{a_{i-1}}-h^{-1}_{a_i}} \, \forall \, i = 2,\ldots, L$. 
			\For {$j=a_L$ to $a_1$}
			\State $P_{a_l}\coloneqq \min \left(P^{\mathrm{max}}_{a_l}, P  -\sum_{i=l+1}^{L}P^{\mathrm{max}}_{a_i}\right)$, where  $P^{\mathrm{max}}_{a_l}$ is defined in \eqref{eq:pmax}. 
			\State Compute $R_i$ from \eqref{eq:rate_lsc}. 
			\EndFor 	
			\State Output $(R_1,\ldots,R_N)$.  		 
			\EndProcedure
	\end{algorithmic}}
\end{algorithm}

\subsection{Optimal Solution to \eqref{eq:BC-Rate-AoI-22} with $r(x)=\log(1+x)$}
In this subsection, we obtain an algorithm to solve \eqref{eq:BC-Rate-AoI-22} when $r(x)=\log(1+x)$. 
Since \eqref{eq:BC-Rate-AoI-22} is jointly convex, the  KKT conditions are necessary and sufficient for optimality. 
The Lagrangian of \eqref{eq:BC-Rate-AoI-22} is given by
\begin{align}
&L=\lambda \left(\sum_{j=1}^{N} \mu_j\left(\sum_{i=1}^{N}(h_i^{-1}-h_{i+1}^{-1}) \exp\left(\frac{\sum_{k=1}^{i}v_k^j}{\mu_j}\right)\right) 	-h_1^{-1}\right)\nonumber\\
&-\lambda  \bar{P}-\sum_{j=1}^{N}\sum_{i=1}^{N}q_iv_i^j-\delta\left(\sum_{j=1}^{N}q_j\mu_j-\alpha^{-1}\right)+\beta\sum_{j=1}^{N}\mu_j-\beta\nonumber\\
&-\sum_{j=1}^{N}\omega_j\left(\sum_{i=1}^{j}v_i^j- \mu_j(R_0+\epsilon)\right)-\sum_{j=1}^{N}\sum_{i=1}^{N}\nu_i^jv_i^j-\sum_{j=1}^{N}\psi_j\mu_j,
\end{align}
where $\lambda, \delta, \beta, \omega_j, \nu_i^j, \psi_j\geq 0$ are the Lagrange multipliers.   
The optimal solution must satisfy, $\frac{\partial L}{\partial v_i^j},\frac{\partial L}{\partial v_{i+1}^j}=0$ for all $i,j=1,\ldots,N$, from which we can obtain the following equation:  
\begin{subequations} 
	\begin{align}
	&\lambda \exp\left(\frac{\sum_{m=1}^{i}v_m^j}{\mu_j}\right)=\frac{p_i}{h_i^{-1}-h_{i+1}^{-1}}+\frac{\nu_i^j-\nu_{i+1}^j}{h_i^{-1}-h_{i+1}^{-1}}, \;\; \text{for}\;\; i=1,\ldots,j-1,j+1,\ldots,N,\\
	&\lambda \exp\left(\frac{\sum_{m=1}^{j}v_m^j}{\mu_j}\right)=\frac{p_j+\omega_j}{h_j^{-1}-h_{j+1}^{-1}}+\frac{\nu_j^j-\nu_{j+1}^j}{h_j^{-1}-h_{j+1}^{-1}}, 
	\end{align}	
\end{subequations} 
for all $j=1,\ldots,N$. 
Note that the above equations are  identical to  \eqref{eq:unconstrained1} when $p_j$ is  replaced by $p_j+\omega_j$.  
Suppose we know $\omega_j$ and $P^j\triangleq \sum_{i=1}^NP_i^j$, we can apply the LWF algorithm in Algorithm~\ref{algo:lwf} and obtain optimal rates $R_1^{j,\rm opt},\ldots, R_N^{j,\rm opt}$ as follows.

\paragraph{Obtaining Optimal Rates, $R_1^{j,\rm opt},\ldots, R_N^{j,\rm opt}$, for known $\omega_j$ for $j=1,\ldots,N$}
From the LWF algorithm, we can obtain the following: 
\begin{itemize}
	\item The set of active layers for the $j^{\rm th}$ rate tuple, which we denote by  $\{a_1^j,\ldots,a_{L_j}^j\}$  and  
	\item The maximum transmit power  and rate in the $i^{\rm th}$ active layer of the $j^{\rm th}$ rate-tuple, which we denote by $P_i^{j,\rm max}$ and $R_i^{j,\rm max}$, respectively, for $i = a_2^j, \ldots,a_{L_j}^j$ and $j=1,\ldots,L$. 
\end{itemize}

Given $P_i^{j,\rm max}$ and $R_i^{j,\rm max}$ for $i = a_2^j, \ldots,a_{L_j}^j$ and $j=1,\ldots,L$, in order to find the transmit power in the first layer of the $j^{\rm th}$ rate tuple, $P_1^j$, and the optimal transmit probability, $\mu_j$, we need to solve the following optimization problem:  
\begin{subequations}\label{eq:E1opt}
	\begin{align}
	\underset{E_1^j, \mu_j}{\text{max}} \;\sum_{j=1}^{N}\mu_j\tilde{q}_1^j&\log\left(1+\frac{\tilde{h}_1^jE_1^j} {\mu_j\left(1+\tilde{h}_1^j\sum_{k=2}^{L_j}P_k^{j,\rm max}\right)}\right)+\sum_{j=1}^{N}\mu_j\sum_{k=2}^{L_j}R_k^{j,\rm max}, \;\;&\\
	\text{subject to}&\;\;
	\sum_{j=1}^{N} \left(E_1^j+ \mu_j\sum_{k=2}^{L_j}P_k^{j,\rm max} \right)\leq \bar{P},\;\; \label{eq:avg-P}  \\
	&\;\; \sum_{j=1}^{N}q_j\mu_j\geq \frac{1}{\alpha},\; \sum_{j=1}^{N}\mu_j\leq 1,\\
	&\;\;  \sum_{i=1}^{j}v_i^j\geq \mu_j(R_0+\epsilon),\;\forall j=1,\ldots,N,  \\
	&\;\; \mu_j,  E_1^j\geq 0,\;\forall j=1,\ldots,N, 
	\end{align}
\end{subequations} 
where $E_1^j\triangleq P_1^j\mu_j$,  $\tilde{h}^j_i\triangleq h_{a_i^j}$, $\tilde{p}^j_i\triangleq p_{a_i^j}$ and $\tilde{q}_i^j=\sum_{k=i}^{L_j}\tilde{p}_k^j$ for all $i=1,\ldots, L_j$ and $j=1,\ldots,N$.

The above problem is jointly convex and we note the following.
\begin{itemize}
	\item For fixed values of $\mu_j$, using the KKT conditions, we can find that the optimal solution of the above problem is
	\begin{align}
	P_1^{j,\rm opt} =  \frac{1}{\delta}-\frac{1}{\tilde{h}_1^j} - \sum_{i=2}^{L_j}P_i^{j,\rm max}, \, \forall\, j=1,\ldots,N, 
	\end{align}
	where $\delta$ is chosen to satisfy \eqref{eq:avg-P} with equality. 
	\item Now, for the fixed values of $E_1^{j,\rm opt} = P_1^{j,\rm opt}\mu_j,\,\forall\, j=1,\ldots,N$, the above problem is convex and we can solve  it for obtaining optimal $\mu_j^*$ using standard numerical techniques. 
\end{itemize} 
From the above observations, it is clear that we can optimally solve \eqref{eq:E1opt} by repeating the above steps in  alternating manner until convergence to the global optimal solution, $E_1^{j,\rm opt},\mu_j^*$ for all $j=1,\ldots,N$. Then, the optimal $R_1^{j,\rm opt}=\log\left(1+ ({\tilde{h}_1^jE_1^{j,\rm opt}/\mu_j^*})/({1+\tilde{h}_1^j\sum_{k=2}^{L_j}P_k^{j,\rm max}})\right),\, \forall\, j=1,\ldots,N$.

\paragraph{Obtaining Optimal $\omega_j$ for known $R_1^{j,\rm opt},\ldots,R_N^{j,\rm opt}$ for $j=1,\ldots,N$}
In order to obtain the optimal $\omega_j$'s, consider the following dual function. 
\begin{align*}
g(\omega_1,\ldots,\omega_N)=&-\sum_{j=1}^{N}\sum_{i=1}^{N}q_iv_i^j-\sum_{j=1}^{N}\omega_j\left(\sum_{i=1}^{j}v_i^j- \mu_j(R_0+\epsilon)\right),
\end{align*}
where we recall that $v_i^j=\mu_jR_i^{j,\rm opt}$. 
From \cite{Boyd}, we note that $g(\omega_1,\ldots,\omega_N)$ must necessarily be concave, and the optimal $\omega_1^*,\ldots,\omega_N^*$ are given by
\begin{align}\label{eq:opt-dual-var}
(\omega_1^*,\ldots,\omega_N^*)=\argmax g(\omega_1,\ldots,\omega_N). 
\end{align}
The above problem can be efficiently solved using a sub-gradient descent method.

Finally, we can obtain the optimal solution to \eqref{eq:BC-Rate-AoI-22} by finding optimal $R_1^{j,\rm opt},\ldots,R_N^{j,\rm opt}$ for fixed  $\omega_j$ for $j=1,\ldots,N$ and vice versa,   until convergence. Since \eqref{eq:BC-Rate-AoI-22} is jointly convex, the above alternating optimization  will converge to the global optimal solution.

\subsection{Proof of Theorem~\ref{thm:NoCSIT}}
In order to solve the upper-bound problem in \eqref{eq:upper-bound-problemBC}, we note that since the CSIT is unavailable, the decision on transmit powers in each block is made using only the  channel power gain  distribution. Moreover, the action we take in a block does not impact the channel  power gain distribution in the subsequent fading blocks. Hence, the optimal action we take must be stationary (possibly random). In the following, we show that the most general SRP one can adopt to solve the upper-bound problem is identical to the AI-SRP proposed in Section~\ref{eq:AISRP-NoCSIT}.

The most general SRP under the broadcast strategy is the following: At each slot, a rate-tuple $(R_1,\ldots,R_N)$, with corresponding power-tuple $(P_1,\ldots,P_N)$, is chosen with an arbitrary joint PDF, $g(R_1,\ldots,R_N)$. The above policy is the most general SRP in the sense that any SRP is a special case of the above policy, as the PDF, $g(R_1,\ldots,R_N)$, is arbitrary. 
Then, $N$ codewords with rates $R_1,\ldots,R_N$ are selected, superimposed and transmitted. 
Define   
	\begin{align}\label{eq:muj}
	\mu_j\triangleq \mathbb{P}\left(\sum_{i=1}^jR_i>R_0, \sum_{i=1}^{j-1}R_i\leq R_0\right),
	\end{align}  
	for $j=1,\ldots,N$. 
	Now, in the broadcast strategy, conditioned on the event that  the transmitted rate-tuple  satisfies $\sum_{i=1}^jR_i>R_0$ and $\sum_{i=1}^{j-1}R_i\leq R_0$,  it will be successful whenever $H\geq h_j$, i.e.,  with probability $q_j$. 
	Hence, the success probability is given by $\sum_{j=1}^{N}\mu_jq_j$. 
	Further, define   
		\begin{align}\label{eq:Rij}
 \mathcal{G}^j\triangleq\left\{\sum_{k=1}^jR_k>R_0, \sum_{k=1}^{j-1}R_k\leq R_0\right\}	\;\; \text{and}\;\; R_i^j \triangleq  \mathbb{E}_g\left[R_i \mid \mathcal{G}^j\right],   
	\end{align} 
	for $j=1,\ldots,N$. 
	Now, the long-term average throughput achievable by the above policy is   $\mathbb{E}_g\left[\sum_{i=1}^Nq_iR_i\right]$, and we have 
	\begin{align}\label{eq:MostGenRate}
	\mathbb{E}_g\left[\sum_{i=1}^Nq_iR_i\right]&=\sum_{j=1}^{N}\mu_j\mathbb{E}_g\left[\sum_{i=1}^Nq_iR_i \mid \mathcal{G}^j\right],\nonumber\\
	&\stackrel{(a)}{=}\sum_{j=1}^{N}\mu_j\sum_{i=1}^Nq_i\mathbb{E}_g\left[R_i \mid \mathcal{G}^j\right]\stackrel{(b)}{=}\sum_{j=1}^{N}\mu_j \sum_{i=1}^Nq_iR_i^j,
	\end{align}
	where (a) follows from the Fubini's theorem 
	and (b) follows from \eqref{eq:Rij}. 
	Moreover,  the long-term average power is given by 
	$\mathbb{E}_g\left[\sum_{i=1}^{N}P_i\right]$ and we note the following: 
	\begin{align}\label{eq:MostGen-Simple}
	\mathbb{E}_g\left[\sum_{i=1}^{N}P_i\right]&=\mathbb{E}_g\left[\sum_{i=1}^{N}h_j^{-1}r^{-1}(R_i)\Pi_{l=1}^{i-1}\left(r^{-1}(R_l)+1\right)\right],  \nonumber\\
	&= \sum_{j=1}^{N}\mu_j\mathbb{E}_g\left[\sum_{i=1}^{N}h_j^{-1}r^{-1}(R_i)\Pi_{l=1}^{i-1}\left(r^{-1}(R_l)+1\right) \mid \mathcal{G}^j\right],\nonumber\\ 
	&\stackrel{(a)}{=} \sum_{j=1}^{N}\mu_j\sum_{i=1}^{N}\mathbb{E}_g\left[h_j^{-1}r^{-1}(R_i)\Pi_{l=1}^{i-1}\left(r^{-1}(R_l)+1\right) \mid \mathcal{G}^j\right],
	\nonumber\\
	&\stackrel{(b)}{\geq} \sum_{j=1}^{N}\mu_j\sum_{i=1}^{N}h_j^{-1}r^{-1}(\mathbb{E}_g\left[R_i\mid \mathcal{G}^j\right])\Pi_{l=1}^{i-1}\left(r^{-1}(\mathbb{E}_g\left[R_l\mid \mathcal{G}^j\right])+1\right),\nonumber\\
	&\stackrel{(c)}{=} \sum_{j=1}^{N}\mu_j\sum_{i=1}^{N}h_j^{-1}r^{-1}(R_i^j)\Pi_{l=1}^{i-1}\left(r^{-1}(R_l^j)+1\right)\stackrel{(d)}{=}\sum_{j=1}^{N}\mu_j\sum_{i=1}^{N}P_i^j, 
	\end{align}
	where (a) follows from the Fubini's theorem, (b) is due to the Jensen's inequality, (c) follows from \eqref{eq:Rij}, and (d) is because, we define  $P_i^j\triangleq h_j^{-1}r^{-1}(R_i^j)\Pi_{l=1}^{i-1}\left(r^{-1}(R_l^j)+1\right)$.

	Now, consider the following policy, which we refer to as the UB-NoCSIT policy:  {We design $N$ rate-tuples, each with $N$ elements, represented by $(R_1^j,\ldots,R_N^j)$, where $R_i^j$ is defined in \eqref{eq:Rij}.  In any given frame, we choose the $j^{\text{th}}$ rate-tuple, $(R_1^j,\ldots,R_N^j)$, with probability $\mu_j$, defined in \eqref{eq:muj}}. Note that the UB-NoCSIT policy is identical to the AI-SRP proposed in Section~\ref{eq:AISRP-NoCSIT}. Moreover,  in       the UB-NoCSIT policy, the long-term average throughput is given by $\sum_{j=1}^{N}\mu_j \sum_{i=1}^Nq_iR_i^j$, the long-term average power is given by $\sum_{j=1}^{N}\mu_j\sum_{i=1}^{N}P_i^j$ and the success probability is given by $\sum_{j=1}^{N}\mu_jq_j$, which is the same as the success probability in the most general policy described above. Now, from  \eqref{eq:MostGenRate} and \eqref{eq:MostGen-Simple}, it follows that the long-term average throughputs of both of the above policies are the same, to achieve which, the average power consumed in the UB-NoCSIT policy is less than or equal to that for the most general policy.  

From the above discussion, we conclude that for every fully general SRP, with a certain long-term average throughput,  there exists another policy, similar to the AI-SRP proposed in   Section~\ref{eq:AISRP-NoCSIT},   that consumes the same or a lower power with the same long-term average throughput as the fully general policy.  This implies that the optimal SRP that solves \eqref{eq:upper-bound-problemBC}  has the structure of of the proposed AI-SRP without CSIT in Section~\ref{eq:AISRP-NoCSIT}. 
	With this strategy, $\lim_{K\rightarrow \infty}\frac{1}{K}\sum_{k=1}^{K}\mathbb{E}[R_s(k)]$, $ \lim_{K\rightarrow \infty}\frac{1}{K}\sum_{k=1}^{K}\mathbb{E}[u(k)]$ and $\lim_{K\rightarrow \infty}\frac{1}{K}\sum_{k=1}^{K}\mathbb{E}[P(k)]$ in \eqref{eq:upper-bound-problemBC} are respectively given by \eqref{eq:AI-SRPobj} and the left-hand sides of \eqref{eq:avgAoIBC} and \eqref{eq:avgPBC}.
	Hence, the upper-bound problem  in \eqref{eq:upper-bound-problemBC} can be re-written as
	
	\begin{subequations}\label{eq:BC-Rate-AoI-UB}
		\begin{align}
		\underset{R_1^j,\ldots,R_N^j, \mu_j}{\text{max}} &\;\; \sum_{j=1}^{N}\sum_{i=1}^{N}\mu_jq_iR_i^j, \;\;&\\
		\text{subject to}&\;\eqref{eq:avgPBC},  \eqref{eq:AI-SRPBC-d} \;\sum_{j=1}^{N}q_j\mu_j\geq \frac{1}{2\alpha-1},\; \sum_{j=1}^{N}\mu_j=1,
		\end{align}
	\end{subequations}
The optimization problems \eqref{eq:BC-Rate-AoI-21} (or equivalently \eqref{eq:BC-Rate-AoI-22}) and \eqref{eq:BC-Rate-AoI-UB} are identical except  that the inequality $\sum_{j=1}^{N}q_j\mu_j\geq 1/\alpha$ in \eqref{eq:BC-Rate-AoI-21} is replaced by  $\sum_{j=1}^{N}q_j\mu_j\geq 1/(2\alpha-1)$ in \eqref{eq:BC-Rate-AoI-UB}. 
	Hence, we have,  $U_{\rm NoCSIT}=R_{\rm NoCSIT}^{\rm AI\_SRP}(2\alpha-1)$, where $R_{\rm NoCSIT}^{\rm AI\_SRP}(2\alpha-1)$ is given by \eqref{eq:BC-Rate-AoI-22}.

\bibliographystyle{ieeetran}
\bibliography{references}

\end{document}